\newtheorem{example}{Example}
\newtheorem{lemma}{Lemma}
\newtheorem{theorem}{Theorem}
\newtheorem{program}{Program}
\newtheorem{procedure}{Procedure}
\newcommand{\CASE}[1]{\STATE \textbf{case} #1\textbf{:} \begin{ALC@g}}			
\newcommand{\ENDCASE}{\end{ALC@g}\STATE \textbf{end case}}			
\newcommand{\DEFAULT}{\STATE \textbf{default:} \begin{ALC@g}}			
\newcommand{\ENDDEFAULT}{\end{ALC@g}}			
\newcommand{\DEFAULTLINE}[1]{\STATE \textbf{default:} }
\definecolor{mygreen}{rgb}{0,0.6,0}
\newif\ifmodify
\newcommand{\modify}[1]{{\color{blue}{#1}}}
\newcommand{\modify}[1]{#1}
\def\etal{\textit{et al.}}
\def\toolname{Y2U}
\def\yakindu{Yakindu}
\def\uppaal{UPPAAL}
\def\mycolor{black}
\def\kwif{{\color{\mycolor}\texttt{if}\ }}
\def\kwskip{{\color{\mycolor}\texttt{skip}\ }}
\def\kwthen{{\color{\mycolor}\texttt{then}\ }}
\def\kwelse{{\color{\mycolor}\texttt{else}\ }}
\def\kwfi{{\color{\mycolor}\texttt{fi}\ }}
\def\kwwhile{{\color{\mycolor}\texttt{while}\ }}
\def\kwdo{{\color{\mycolor}\texttt{do}\ }}
\def\kwod{{\color{\mycolor}\texttt{od}\ }}
\def\kw?{{\color{\mycolor}\texttt{?}\ }}
\def\triple#1#2#3{\{#1\}#2\{#3\}}
\begin{document}
	
\title{Design Verifiably Correct Model Patterns to Facilitate Modeling Medical Best Practice Guidelines with Statecharts\\ (Technical Report)}

\author{
	\IEEEauthorblockN{Chunhui Guo\textsuperscript{1}, Zhicheng Fu\textsuperscript{1}, Zhenyu Zhang\textsuperscript{2}, Shangping Ren\textsuperscript{2,1}, Lui Sha\textsuperscript{3}}
	
	\IEEEauthorblockA{		
		\textsuperscript{1}Department of Computer Science, Illinois Institute of Technology, Chicago, IL 60616, USA \\
		\textsuperscript{2}Department of Computer Science, San Diego State University, San Diego, CA 92182, USA \\
		\textsuperscript{3}Department of Computer Science, University of Illinois at Urbana-Champaign, Urbana, IL 61801, USA \\
		\{cguo13, zfu11\}@hawk.iit.edu, \{zzhang4430, sren\}@sdsu.edu, lrs@illinois.edu
	}
}

\maketitle

\begin{abstract}
	Improving patient care safety is an ultimate objective for medical cyber-physical systems. A recent study shows that the patients' death rate can be significantly reduced by computerizing medical best practice guidelines. To facilitate the development of computerized medical best practice guidelines, statecharts are often used  as a modeling tool because of  their high resemblances to disease and treatment models and their capabilities to provide rapid prototyping and simulation for clinical validations.
However, some implementations of statecharts, such as \yakindu\ statecharts,
are priority-based and have synchronous execution semantics which makes it difficult to model certain functionalities that are essential in modeling medical guidelines, such as two-way communications and configurable execution orders. Rather than introducing new statechart elements or changing the statechart implementation's underline semantics, we use existing basic statechart elements to design model patterns for the commonly occurring issues.
In particular, we show the design of model patterns for two-way communications and configurable execution orders and formally prove the correctness of these model patterns.
We further use a simplified airway laser surgery scenario as a case study to demonstrate how the developed model patterns address the two-way communication and configurable execution order issues and their impact on validation and verification of medical safety properties.

\end{abstract}

\begin{IEEEkeywords}
	Verifiably correct model patterns, medical guideline modeling, Statechart models.
\end{IEEEkeywords}

\section{Introduction and Related Work}
\label{sec:intro}
\IEEEPARstart{A}{} study shows that the patients' death rate can be significantly reduced by computerizing medical best practice guidelines~\cite{Mckinley2011computer}.
\modify{
Developing computerized disease and treatment models from medical best practice handbooks needs close interactions with medical professionals.
In addition, to satisfy the safety and correctness requirements, the derived models also need to be clinically validated and formally verified.
Over past two decades, many computer executable medical best practice
guideline models are developed, such as Asbru~\cite{Balser2002Asbru},
GLIF~\cite{patel1998representing}, GLARE~\cite{Terenziani2004GLARE},
EON~\cite{Tu2001EON}, and PROforma~\cite{fox1998disseminating}.
Rahmaniheris \etal~\cite{Rahmaniheris2016CBMS} have developed an
organ-centric approach to model medical best practice guidelines
with statecharts, which enable rapid prototyping and allow quick
clinical validations by medical staff through simulations.
Tan \etal~\cite{Tan2013DSN} proposed a design pattern for wireless
medical Cyber-Physical Systems (CPS). Jiang \etal~\cite{Jiang2015TIE}
presented a timed automata and synchronous dataflow based framework
for system design. Guo \etal~\cite{Guo2017ICCAD,Guo2018ICCPS} proposed
approaches to model and integrate medical resource availability and
relationships in existing medical guideline models.
To help improve clinical validation, Wu \etal\ have developed a workflow adaptation protocol~\cite{WuWorkflow2015} to help physicians safely adapt workflows to react to patient adverse events and a treatment validation protocol~\cite{WuTreatment2014} to enforce the correct execution sequence of performing a treatment based on preconditions validation, side effects monitoring, and expected responses checking. Both the workflow adaptation and treatment validation protocol are based on  pathophysiological models. In addition, based on organ-specific physiology, a system that integrates medical devices into semi-autonomous clusters in a network-fail-safe manner  has also been developed by Kang \etal~\cite{Kang2013CBMS}.
Christov \etal~\cite{Christov2013SEHC} proposed an approach to detect whether the performed
medical procedures have deviated from the recommended ways to perform the medical
procedures, i.e., medical best practices.
To formally verify safety properties of medical guideline models,
Guo \etal~\cite{Guo2016ICCPS} presented an approach to transform
statecharts to timed automata. Runtime verification techniques~\cite{Jiang2016TII,Guo2017COMPSAC}
were also applied to improve safety of medical guideline
systems at code level.
}

Most existing medical best practice guidelines in
hospital handbooks are represented by flowcharts~\cite{Guideline2015}
which are very similar to statecharts~\cite{harel1987statecharts},
so are many medical disease models and treatment models.
In addition to the high similarities between medical models and statecharts,
statecharts are executable and have become a widely used model in designing complex systems, 
such as avionics~\cite{Romdhani1995Avionics}, air traffic control systems~\cite{Whittle2005}, and medical systems~\cite{Rahmaniheris2016CBMS,Rahmaniheris2017ICCPS,Guo2016ICCPS}. These distinguishing features of statecharts have inspired us to use it as a computerized representation for medical best practice guidelines. However, there are functionalities that are essential in medical operations, such as two-way communications and configurable execution orders, which are not directly supported by some open source statechart modeling tools, such as \yakindu. We use the following two examples to illustrate the need of two-way communications and configurable execution orders in medical domain. By two-way communications we mean two statecharts can communicate with each other, and by configurable execution orders we mean the execution orders can be configured by users without change the model itself.

\begin{example}
	\label{ex:laser}
	Laser surgery~\cite{LaserSurgery} is a surgical procedure that uses a laser to remove problematic tissues and is widely used in airway surgery, thoracic surgery, eye surgery, etc.	
	For airway laser surgery, there are two potential dangers:
	(1) an accidental burn if both laser and ventilator are activated;
	and (2) a low-oxygen shock if the Saturation of Peripheral Oxygen ($\mathtt{SpO}$) level of the patient decreases below a given threshold (assume $95\%$)~\cite{Kim2010ICCPS}.
	To prevent the potential dangers, the airway laser and the ventilator must be able to communicate with each other bidirectionally, i.e., the communication needs to be two-way. In particular, when the surgery starts, the airway laser turns on and notifies the ventilator to turn off; and when the patient's $\mathtt{SpO}$ level becomes below $95\%$, the ventilator turns on and notifies the airway laser to turn off.
\end{example}

\begin{example}
	\label{ex:cough}
	In the chronic cough treatment guideline~\cite{Benich2011CoughGuideline}, chest radiography	has to be performed before sputum test and bronchoscopy, but the
	guideline does not specify the order between sputum test and bronchoscopy.
	The physicians can choose the medical procedure order
	$\mathtt{chest \ radiography}$ $\prec$ $\mathtt{sputum \ test}$ $\prec$ $\mathtt{bronchoscopy}$	
	or $\mathtt{chest \ radiography}$ $\prec$ $\mathtt{bronchoscopy}$ $\prec$ $\mathtt{sputum \ test}$	
	based on their experiences/preferences and medical resource availability.	
\end{example}

\modify{
Without the two-way communication support, the laser and the ventilator
in the airway laser surgery can be both activated and cause surgery fire.
We present more details about the scenario in the case study, i.e., Section~\ref{sec:exp}.
Without the configurable execution order support, if a physician deviate from
the medical procedure execution order specified in a medical
guideline system, the system can not identify whether or not the deviation is safe.
Hence, supporting two-way communications and configurable execution orders
are very important for modeling medical best practice guidelines.
}

However, most existing medical guideline modeling languages,
such as Asbru~\cite{Balser2002Asbru}, GLIF~\cite{patel1998representing},
and PROforma~\cite{Fox1998PROforma},
do not support these essential functionalities.
Althouh Simulink Stateflow~\cite{stateflow}, a statechart variant from
Matlab~\cite{matlab}, supports two-way communications through
\textit{condition actions} and preemptive execution semantics, it does not support execution order change without modifying existing models.

Open source \yakindu\ statecharts are priority-based and have synchronous execution semantics.
With such execution semantics, only higher priority statecharts can
send \textit{events} to lower priority statecharts, but not the other way around.
In addition, each statechart is pre-assigned an unique priority
to determine its execution order when a model is established.

A naive approach to implement two-way communication functionality is 
to use global variables shared among multiple statecharts. However, using global variables has known disadvantages of increased difficulty and complexity to maintain data consistence. An alternative is to take a similar approach as in Stateflow~\cite{stateflow} by introducing new elements into \yakindu\ to interrupt the execution
of the \textit{event} sender statechart and resume after handling the \textit{event}. 
However, such  approach changes \yakindu\ statecharts' underline
execution semantics and  violates \yakindu\ statecharts' original design goal.

To address the configurable execution order issue, a straight forward
approach is to customize medical guideline statechart models based
on execution orders provided by physicians.
However, the approach faces the following challenges:
(1) a medical guideline model requires a variant for every
physician to facilitate different execution orders, which is
cumbersome for practical use;
and (2) engineers need to be involved in clinical care to manually modify
medical guideline models when physicians change execution orders.

The major challenge of implementing the essential functionalities
in modeling medical guidelines, such as two-way communications
and configurable execution orders,
is that it has to be effortless for medical professionals
to validate its correctness and formally verifiable at reasonable cost.
The paper presents an approach to apply model patterns to support these
essential functionalities. The model patterns do not introduce new
statechart elements nor change statecharts' underline execution semantics.
Hence, the model patterns do not require additional effort for medical
professionals to validate their correctness.
In addition, our previous work~\cite{Guo2016ICCPS}
can also be applied to formally verify medical guideline models
that apply the model patterns.

The rest of the paper is organized as follows.
Section~\ref{sec:prelim} briefly introduce the preliminary
work on developing verifiably safe medical best practice
guidelines with statecharts.
We design the two-way communication model pattern and
configurable execution order model pattern in Section~\ref{sec:pattern}
and formally prove the model patterns' correctness in Section~\ref{sec:proof}.
A case study of a simplified airway laser surgery scenario
is performed in Section~\ref{sec:exp}. We present some discussions
in Section~\ref{sec:discussion} and conclude in
Section~\ref{sec:conclusion}.

\section{Preliminary Work}
\label{sec:prelim}
Our previous work~\cite{Guo2016ICCPS} presents an approach to
build verifiably safe executable medical guideline models in
two steps:
(1) use statecharts~\cite{harel1987statecharts} to model medical guidelines and
interact with medical professionals to validate the correctness of the medical
guideline models;
and (2) automatically transform medical guideline statecharts to
timed automata~\cite{behrmann2004tutorial} by the developed \toolname\
tool~\cite{Guo2016ICCPS} to formally verify safety properties.
In this section, we use the simplified airway laser surgery scenario in Example~\ref{ex:laser}
as an example to briefly summarize the process of building verifiably
safe executable medical guideline models.

\subsection{Model Medical Guidelines with Statecharts}
\label{subsec:model}

We use \yakindu\ statecharts to model the simplified airway laser surgery,
as shown in Fig.~\ref{fig:laser}.
Both laser and ventilator are modeled as a statechart with three states
($\mathtt{On}$, $\mathtt{Off}$, and $\mathtt{Syn}$) to represent the
devices' operation status.
To prevent the accidental burn danger caused by laser and ventilator are both
activated, we add the $\mathtt{Syn}$ state to ensure that
the laser/ventilator's activation procedure is delayed one step after
the ventilator/laser's deactivation procedure.
The initial state of the airway laser surgery system is: the laser
is off and the ventilator is on to supply oxygen to the patient.
When the ventilator is turned off, we assume that the patient's
$\mathtt{SpO}$ level decreases by 1 every second.

\setlength{\intextsep}{5pt}
\setlength{\columnsep}{0pt}
\begin{figure}[ht]
	\centering
	\includegraphics[width = 0.45\textwidth]{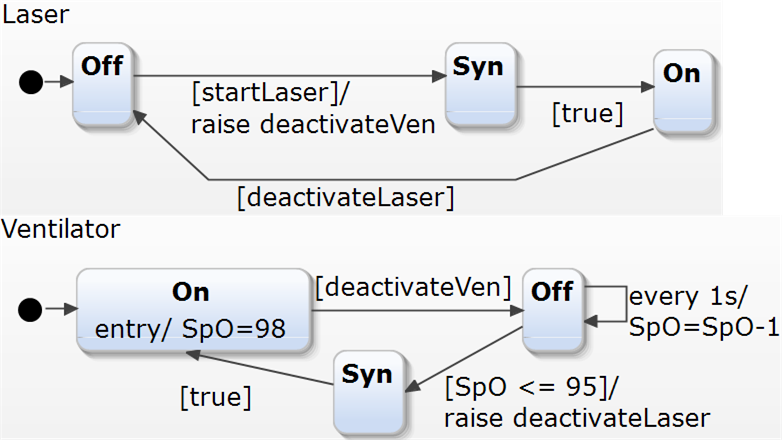}
	\caption{Simplified Airway Laser Surgery Statechart Model}
	\label{fig:laser}
\end{figure}

In the statechart shown in Fig.~\ref{fig:laser}, when an operating
surgeon sends the $\mathtt{startLaser}$ \textit{event}
to the $\mathtt{laser}$ to operate the surgery, the $\mathtt{laser}$
sends the $\mathtt{deactivateVen}$ \textit{event} to stop oxygen supply.
If the patient's $\mathtt{SpO}$ level reduces below $95\%$,
the $\mathtt{ventilator}$ turns on to supply oxygen and
sends the $\mathtt{deactivateLaser}$ \textit{event} to
stop the $\mathtt{laser}$.

\subsection{Verifiably Safe Statecharts}
\label{subsec:Y2U}

Statecharts contain basic elements, such as
\textit{states} and \textit{transitions}, and advanced elements,
such as \textit{composite states}. Although these advanced elements
provide modeling convenience, they increase the difficulty of both
clinical validation and formal verification.
As stated in the literature, one of the keys to achieving
system safety at reasonable cost is a serious and sustained
commitment to simplicity~\cite{Jackson2007Book,Carayon2011Handbook}.
To reduce the difficulty in both clinical validation and formal
verification,
our previous work~\cite{Guo2017CBMS} proposes a pattern-based statechart modeling approach
to model medical guidelines with basic statechart
elements and model patterns which are built upon these basic
elements to implement advanced statechart elements.

To formally verify medical guideline statechart models,
our previous work~\cite{Guo2016ICCPS} presents an approach to
transform statechart models to timed automata.
There are three key differences between \yakindu\ statecharts
and \uppaal\ timed automata:
(1) syntactic difference: \yakindu\ statecharts have some elements that are not
directly supported by \uppaal\ timed automata, such as \textit{event} and
\textit{timing trigger};
(2) execution semantics difference: \yakindu\ model is deterministic
and has synchronous execution semantics while the execution of \uppaal\ model is non-deterministic
and asynchronous;
and (3) simultaneous \textit{events} difference: \yakindu\ supports simultaneous events while \uppaal\ does not.

Our transformation handles above three key
differences by following approaches:
(1) syntactic difference: define 5 transformation rules for basic \yakindu\ statecharts
elements, i.e., \textit{state}, \textit{transition}, \textit{state action}, \textit{event},
and \textit{timing trigger};
(2) execution semantics difference: design 2 transformation rules to implement
transition trigger determinism and statechart execution determinism, 
and use the lockstep method~\cite{lockstep} to force synchronous execution;
and (3) simultaneous \textit{events} difference: design an event stack to
simulate simultaneous \textit{event} mechanism in \uppaal\ timed automata.

To ensure that the formal verification results in timed automata
is consistent with the statecharts, we prove the transformation
correctness. The approach also provides the capability to trace back
paths that fail safety properties from timed automata to statecharts.
Although our approach only transforms basic \yakindu\ statechart
elements, it is sufficient to provide formal verification functionality
for medical guideline statecharts
because the advanced statechart elements can be represented by
basic elements~\cite{Guo2017CBMS}.

\section{Model Pattern Design}
\label{sec:pattern}
Some statecharts have priority-based, deterministic,
and synchronous execution semantics, such as \yakindu\ statecharts.
The execution semantics make it difficult to model certain features that are essential
in modeling medical guidelines, such as two-way communications and
configurable execution orders. In this section, we design model patterns
to support two-way communications and configurable execution orders
without changing statecharts' underline execution semantics
nor introducing new statechart elements.

\subsection{Design Model Pattern for Two-Way Communication}
\label{subsec:TWC}

In statechart models with multiple statecharts, two-way communications
are essential. For instance, the $\mathtt{laser}$ statechart and
$\mathtt{ventilator}$ statechart in Fig.~\ref{fig:laser} require
two-way communications.
We represent the two-way communication requirement by the statechart model shown in
Fig.~\ref{fig:TWCex1}. The two-way communication feature means that
statechart $\mathtt{S1}$ can receive \textit{event} $\mathtt{EB}$ raised
by $\mathtt{S2}$ and 
statechart $\mathtt{S2}$ can receive \textit{event} $\mathtt{EA}$ raised
by $\mathtt{S1}$.
However, if statechart $\mathtt{S1}$ has higher priority than statechart
$\mathtt{S2}$, the \textit{event} $\mathtt{EB}$ can not be passed from
$\mathtt{S2}$ to $\mathtt{S1}$ due to priority-based and synchronous
execution semantics.

\setlength{\intextsep}{5pt}
\setlength{\columnsep}{0pt}
\begin{figure}[ht]
	\centering
	\includegraphics[width = 0.25\textwidth]{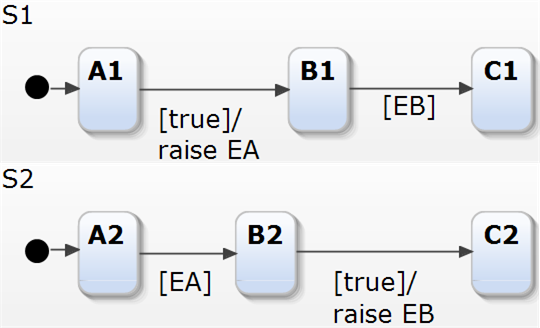}
	\caption{Two-Way Communication Feature}
	\label{fig:TWCex1}
\end{figure}

In this section, we design a model pattern to support two-way communication in statecharts.
The model pattern design needs to satisfy the following two criteria:
(1) it does not changing statecharts' underline execution semantics;
and
(2) it does not introduce new statechart elements.
The reason of these two criteria is that new statechart
elements and new execution semantic rules increase the difficulty
for medical professionals to validate medical guideline models.
Based on the model pattern design criteria, our strategy to support
two-way communication is to queue all raised \textit{events}, add
\textit{logic execution cycles}, and re-raise the queued \textit{events}
in the added \textit{logic cycles}.
Note that each added \textit{logic execution cycle} takes one CPU time unit which is negligible
compared to clock time.
Under the two-way communication model pattern,
the statechart $\mathtt{S2}$ in Fig.~\ref{fig:TWCex1} can send
\textit{event} $\mathtt{EB}$ to statechart $\mathtt{S1}$.

We take \yakindu\ statecharts as an example to 
implement the two-way communication model pattern.
In particular,
for each execution cycle (we call it \textit{normal cycle}), we queue all
raised events.
After each normal execution cycle, we add $n-1$
\textit{logic cycles} to re-raise queued events, where
$n$ is the number of statecharts.
In each \textit{logic cycle}, each queued event is only visible to statecharts
whose priorities are higher than the event raiser, as lower
priority statecharts have already received the queued
event in the \textit{normal cycle}. Furthermore, at each \textit{logic cycle},
only those transitions that are triggered by queued events are
executed. This constraint is to ensure that the \textit{logic cycle}
does not change the model behavior other than facilitate higher
priority statecharts to receive events from lower priority statecharts.

The two-way communication model pattern contains
a $\mathtt{Manager}$ statechart and
a interface $\mathtt{TWC}$, as shown in Fig.~\ref{fig:TWCpattern}.
In particular, the $\mathtt{Manager}$ statechart has the
highest priority in the model and initializes the event queues.
The interface $\mathtt{TWC}$ declares four functions:
$\mathtt{initEventQueue()}$,
$\mathtt{push()}$,
$\mathtt{pop()}$,
and $\mathtt{isNormalExe()}$.

\setlength{\intextsep}{5pt}
\setlength{\columnsep}{0pt}
\begin{figure}[ht]
	\centering
	\includegraphics[width = 0.49\textwidth]{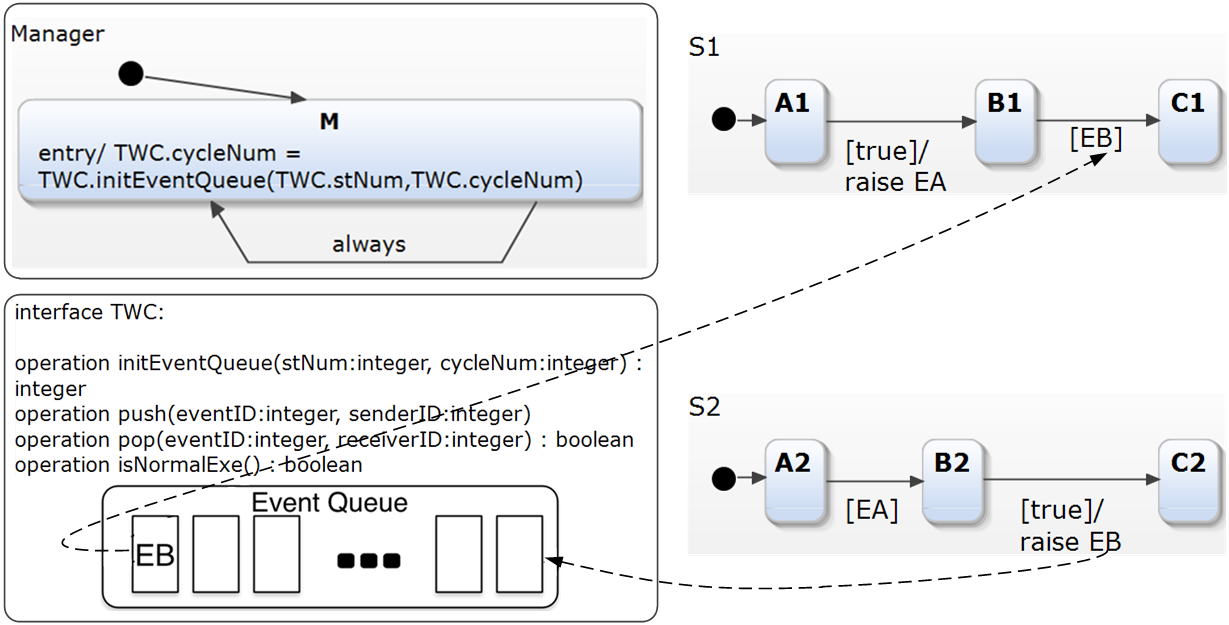}
	\caption{Two-Way Communication Model Pattern}
	\label{fig:TWCpattern}
\end{figure}

We define a Java class $\mathtt{TWC}$ shown in Fig.~\ref{fig:TWCcode}
to implement the $\mathtt{TWC}$ interface.
The event queue is implemented by two arrays $\mathtt{queuedEvents}$ and
$\mathtt{queuedEventsSender}$ to store raised events and their corresponding
senders.
The $\mathtt{TWC}$ class also includes
a boolean variable $\mathtt{normalExe}$ that indicates if the current
execution cycle is a \textit{normal cycle} or an \textit{logic cycle} and 
an integer variable $\mathtt{queuedEventNum}$ indicating the size of the
event queue.
The four functions declared in the $\mathtt{TWC}$ interface are implemented
as follows:
\begin{enumerate}
	\item $\mathtt{long \ initEventQueue(long \ stNum, long \ cycleNum)}$
	assigns the value of $\mathtt{normalExe}$ based on current execution cycle number $\mathtt{cycleNum}$,
	increases the execution cycle number by 1,
	and clears the event queue when the execution enters \textit{normal cycles}.
	
	\item $\mathtt{void \ push(long \ event, long \ sender)}$ pushes a raised
	event and the event raiser into the event queue. 
	
	\item $\mathtt{boolean \ pop(long \ event, long \ receiver)}$ checks if the 
	input $\mathtt{event}$ is valid. In \textit{normal cycles},
	events raised by higher priority stetecharts are
	valid; while in \textit{logic cycles}, events raised
	by lower priority stetecharts are valid.	
	
	\item $\mathtt{boolean \ isNormalExe()}$ checks if the current execution cycle
	is a \textit{normal cycle}.
\end{enumerate}
The $\mathtt{TWC}$ class only uses
basic Java data types, such as integer, boolean, and integer array,
and basic Java statements, such as assignment, \textit{if} statement, and \textit{while} statement.
The implementation principle of only using basic elements
decreases the difficulty of correctness proof in Section~\ref{subsec:TWCproof}.

\setlength{\intextsep}{5pt}
\setlength{\columnsep}{0pt}
\begin{figure}[ht]
	\centering
	\includegraphics[width = 0.4\textwidth]{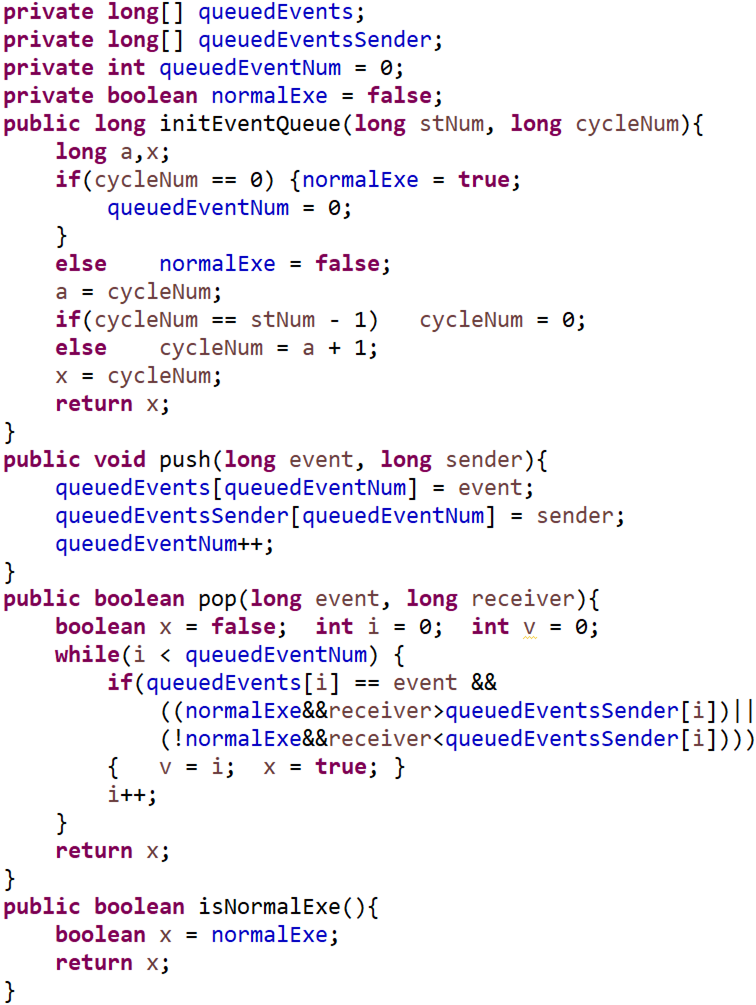}
	\caption{Two-Way Communication Interface Implementation}
	\label{fig:TWCcode}
\end{figure}

We define Procedure~\ref{prc:applyTWC} to apply the two-way communication
model pattern in existing statechart models.

\begin{procedure}
\label{prc:applyTWC}
~
\begin{itemize}
	\item \textbf{Step 1:} add the interface $\mathtt{TWC}$ and the $\mathtt{Manager}$ statechart;
	
	\item \textbf{Step 2:} replace each event raise action by
	$\mathtt{TWC.push(TWC.\textit{eventID}, TWC.\textit{senderID})}$;
	
	\item \textbf{Step 3:} modify each transition's guard $G$ as follows:
	\begin{itemize}
		\item if $G$ does not contain any event, replace $G$ with
		$G \ \&\& \ \mathtt{TWC.isNormalExe()}$;
		
		\item if $G$ contains events, replace each event part
		in $G$ with corresponding expression
		$\mathtt{TWC.pop(TWC.\textit{eventID}, TWC.\textit{receiverID})}$.
	\end{itemize}

\end{itemize}	
\end{procedure}

\subsection{Design Model Pattern for Configurable Execution Order}
\label{subsec:EOM}

In statechart models with multiple statecharts, end users may
want to configure statechart execution orders 
based on their experiences and preferences.
We represent configurable execution order requirement by the statechart model shown in
Fig.~\ref{fig:EOMex1}.
If the execution order is $\mathtt{S1}$ $\prec$ $\mathtt{S2}$, the action
$x=x+1$ is executed first. While if
the execution order is $\mathtt{S2}$ $\prec$ $\mathtt{S1}$, the action
$y=y+1$ is executed first.
However, the statechart execution orders (represented by priorities)
are pre-assigned when building statechart
models and can not be changed without modifying existing statecharts.

\setlength{\intextsep}{5pt}
\setlength{\columnsep}{0pt}
\begin{figure}[ht]
	\centering
	\includegraphics[width = 0.25\textwidth]{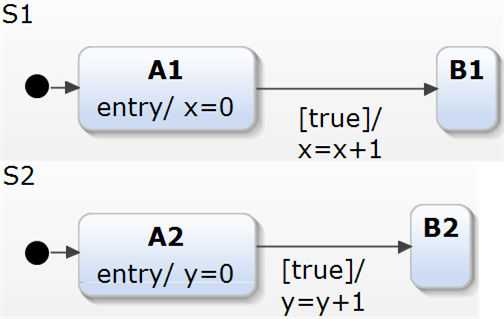}
	\caption{Configurable Execution Order Feature}
	\label{fig:EOMex1}
\end{figure}

In this section, we design a model pattern to allow end users to
configure statechart execution orders without modifying existing
statechart models.
The design criteria is the same with the two-way communication
model pattern presented in Section~\ref{subsec:TWC}.
Based on the model pattern design criteria, our strategy to support
configurable execution order is to represent user specified execution
order in configure files, add \textit{logic execution cycles},
and apply token-based ordering
to achieve desired execution orders.

We also take \yakindu\ statecharts as an example to 
implement the configurable execution order model pattern.
In particular, the model pattern generates a token for each
\yakindu\ execution cycle based on a specified order.
For each \textit{normal execution cycle}, we add $n-1$
\textit{logic cycles}, where $n$ is the number of statecharts.
During an execution
cycle, only the statechart whose priority matches the generated token will execute
one step.

Similar to the two-way communication model pattern in Section~\ref{subsec:TWC},
the configurable execution order model pattern contains
a $\mathtt{Manager}$ statechart and
a interface $\mathtt{CEO}$, as shown in Fig.~\ref{fig:EOMpattern}.
In particular, the $\mathtt{Manager}$ statechart has the
highest priority in the model and updates the execution token
in each execution cycle.
The interface $\mathtt{CEO}$ declares two functions:
$\mathtt{updateExeInfo()}$ and $\mathtt{run()}$.

\setlength{\intextsep}{5pt}
\setlength{\columnsep}{0pt}
\begin{figure}[ht]
	\centering
	\includegraphics[width = 0.49\textwidth]{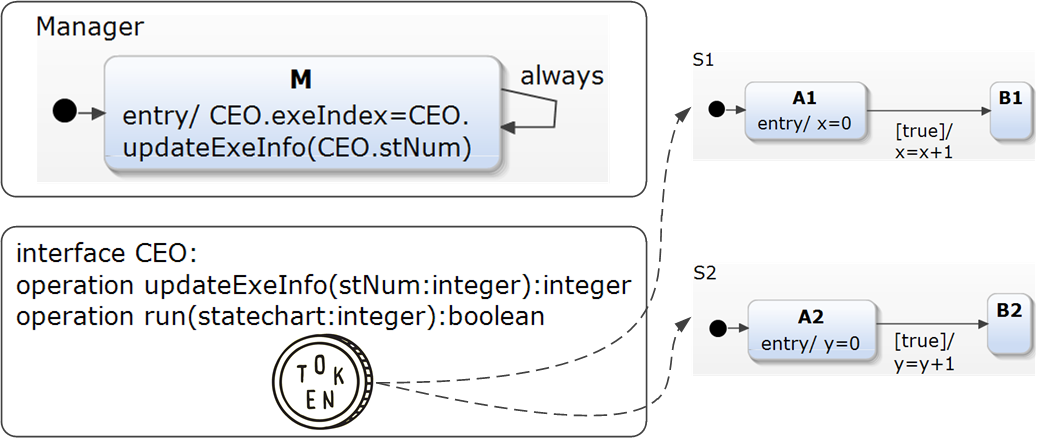}
	\caption{Configurable Execution Order Model Pattern}
	\label{fig:EOMpattern}
\end{figure}

We define a Java class $\mathtt{CEO}$ shown in Fig.~\ref{fig:EOMcode}
to implement the $\mathtt{CEO}$ interface.
The $\mathtt{CEO}$ class includes an integer variable $\mathtt{exeIndex}$
to represent the execution token and an array $\mathtt{exeOrders}$ to
store user specified execution orders, respectively.
The two functions declared in the $\mathtt{CEO}$ interface are implemented
as follows:
\begin{enumerate}
	\item $\mathtt{long \ updateExeInfo(long \ stNum)}$ updates the execution token,
	if the token is equal to the statechart number $\mathtt{stNum}$, the token
	is reset to be $1$, otherwise the token is increased by $1$;	
	
	\item $\mathtt{boolean \ run(long \ st)}$ checks if the input statechart $\mathtt{st}$
	matches the current execution token.
\end{enumerate}
The $\mathtt{CEO}$ class also only uses
basic Java data types and basic Java statements,
which decreases the difficulty of correctness proof in Section~\ref{subsec:EOMproof}.

\setlength{\intextsep}{5pt}
\setlength{\columnsep}{0pt}
\begin{figure}[ht]
	\centering
	\includegraphics[width = 0.33\textwidth]{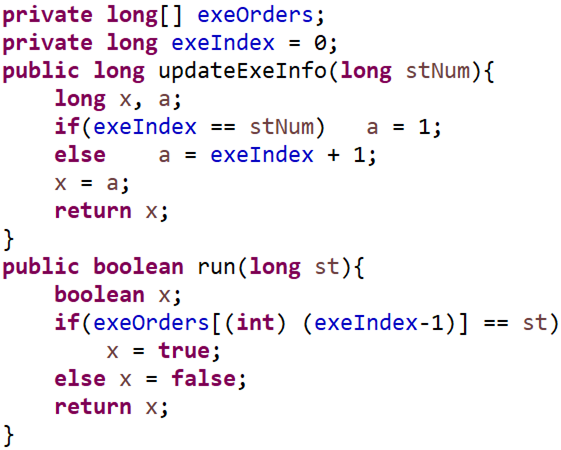}
	\caption{Configurable Execution Order Interface Implementation}
	\label{fig:EOMcode}
\end{figure}

We define Procedure~\ref{prc:applyEOM} to apply the configurable execution order
model pattern in existing statechart models.

\begin{procedure}
	\label{prc:applyEOM}
	~
	\begin{itemize}
		\item \textbf{Step 1:} add the interface $\mathtt{CEO}$ and the $\mathtt{Manager}$ statechart;
		
		\item \textbf{Step 2:} replace each transition's guard $G$ by
		$G  \ \&\& \ \mathtt{CEO.run(CEO.\textit{statechartID})}$.		
	\end{itemize}	
\end{procedure}

In summary, we design two model patterns to support
two-way communication and configurable execution order in statecharts.
Both model patterns are implemented with basic statechart elements and external
Java codes. The approach has three advantages:
(1) it does not change statechart syntax and its underline execution semantics,
hence the two model patterns can be applied to all statechart models;
(2) it does not increase the difficulty of clinical validation and formal verification;
(3) existing work on modeling, validating, and verifying medical guidelines with statecharts
can be applied.
For safety-critical medical systems, the correctness of the designed model patterns
is crucial. We formally prove the two model patterns' correctness in Section~\ref{sec:proof}.

\section{Model Pattern Correctness Proof}
\label{sec:proof}
The model patterns designed in Section~\ref{sec:pattern} are implemented
with basic statechart elements and external Java code.
Our previous work~\cite{Guo2016ICCPS} have proved that the transformation
of basic statechart elements to timed automata is correct.
In this section, we prove the correctness of designed model patterns
in two steps:
(1) prove that the Java implementation is correct; 
and
(2) transform model patterns to timed autoamta to formally verify that
desired properties hold.
The strategy to prove the Java implementation is as follows:
(1) represent each Java function by a WHILE program~\cite{book536};
(2) construct a Hoare triple~\cite{book536} for each WHILE program;
and
(3) prove the correctness of the Hoare triples.

\subsection{Two-Way Communication Model Pattern}
\label{subsec:TWCproof}

We represent the Java functions $\mathtt{initEventQueue()}$,
$\mathtt{push()}$, $\mathtt{pop()}$, and $\mathtt{isNormalExe()}$
by WHILE programs Program~\ref{pro:initEventQueue}, Program~\ref{pro:push},
Program~\ref{pro:pop}, and Program~\ref{pro:isNormalExe}, respectively.
The variables used to implement the two-way communication model pattern 
are listed in Table~\ref{tab:TWC}.
We construct Hoare triples and prove their correctness in
Lemma~\ref{lm:initEventQueue}-\ref{lm:isNormalExe}.
The precondition and postcondition of each Hoare triple are the input and
output of the corresponding WHILE program based on the functionality.

\begin{table*}[h]
	\caption{Variables in Two-Way Communication Model Pattern Implementation} 
	\label{tab:TWC}
	\centering  
	\begin{tabular}{ | l | l | c | p{7cm} | } \hline
		\textbf{Variable in \yakindu\ Java Code} & \textbf{Variable in WHILE Program} & \textbf{Variable Type} & \textbf{Variable Meaning} \\ \hline  
		$\mathtt{queuedEvents[]}$ & $E[]$ & int array & raised events \\ \hline
		$\mathtt{queuedEventsSender[]}$ & $S[]$ & int array & the event senders of corresponding element in $\mathtt{queuedEvents[]}$ \\ \hline
		$\mathtt{queuedEventNum}$ & $n$ & int & number of raised event, i.e., size of $E[]$ and $S[]$\\ \hline
		$\mathtt{normalExe}$ & $\mathtt{exe}$ & bool & a variable indicates if the current
		execution cycle is a normal cycle or an additional cycle \\ \hline
		$\mathtt{stNum}$ & $\mathtt{stNum}$ & int & statechart number of the given model \\ \hline
		$\mathtt{cycleNum}$ & $c$ & int & the number of execution cycle \\ \hline
		$\mathtt{event}$ & $e$ & int & event identity \\ \hline
		$\mathtt{sender}$ & $s$ & int & event sender \\ \hline
		$\mathtt{receiver}$ & $r$ & int & event receiver \\ \hline
		$v$ & $v$ & int & event $e$'s index in $E[]$ and $S[]$ \\ \hline
		$i$ & $i$ & int & iterator of integer arrays $E[]$ and $S[]$ \\ \hline
		$a$ & $a$ & int & temporary variable \\ \hline		
		$x$ & $x$ & int/bool & the return of functions \\ \hline
	\end{tabular} 
\end{table*}

\begin{program}
	\label{pro:initEventQueue}
	\begin{align*}
	& \mathtt{initEventQueue}\equiv	\\
	& \qquad \kwif c=0 \ \kwthen \ \mathtt{exe}:=\mathtt{true}; \ n:=0; \\
	& \qquad \kwelse \mathtt{exe}:=\mathtt{false}; \ \kwfi;\\
	& \qquad a:=c; \\
	& \qquad \kwif c=\mathtt{stNum}-1 \ \kwthen \ c:=0; \\
	& \qquad \kwelse c:=a+1; \ \kwfi;\\
	& \qquad x:=c
	\end{align*} 
\end{program}

The input of the $\mathtt{initEventQueue}$ program is the statechart number and
cycle number. The two-way communication requires as least two statecharts, hence
$\mathtt{stNum}>1$. As the two-way communication model pattern adds $\mathtt{stNum}-1$
\textit{logic cycle} and treats the \textit{normal cycle} as $0$,
hence $c < \mathtt{stNum}$.
Therefore, the precondition of the $\mathtt{initEventQueue}$ program
is $c < \mathtt{stNum} \land \mathtt{stNum}>1$.
According to the functionality, the $\mathtt{initEventQueue}$ program has two possible
outputs:
(1) if the current execution cycle is a \textit{normal cycle}, then program assigns the
number of next cycle to be $1$ and clears the event queue, i.e.,
$x=1 \land \mathtt{exe}=\mathtt{true} \land n=0$;
and
(2) if the current execution cycle is a \textit{logic cycle}, then program increases
the number of next cycle by $1$; if the current execution cycle is the last \textit{logic cycle},
then program assigns the number of next cycle to be $0$, i.e.,
$(x=0 \lor x=a+1) \land \mathtt{exe}=\mathtt{false}$.
Hence, the the postcondition of the $\mathtt{initEventQueue}$ program
is $(x=1 \land \mathtt{exe}=\mathtt{true} \land n=0) \lor ((x=0 \lor x=a+1) \land \mathtt{exe}=\mathtt{false})$.
We prove the correctness of the $\mathtt{initEventQueue}$ program in Lemma~\ref{lm:initEventQueue}.

\begin{lemma}
	\label{lm:initEventQueue}
	$\triple{c < \mathtt{stNum} \land \mathtt{stNum}>1}{\mathtt{initEventQueue}}{(x=1 \land \mathtt{exe}=\mathtt{true} \land n=0) \lor ((x=0 \lor x=a+1) \land \mathtt{exe}=\mathtt{false})}$
\end{lemma}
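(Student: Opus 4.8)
The plan is to prove the Hoare triple by the standard weakest-precondition / forward symbolic execution approach, tracing the program's two sequential conditionals and verifying that every path from the precondition lands in the postcondition. Since $\mathtt{initEventQueue}$ is straight-line code built only from assignments and two $\kwif$--$\kwelse$--$\kwfi$ constructs (no loops), I expect no loop invariants are needed, so the Hoare-logic rules reduce to the assignment axiom and the conditional rule applied compositionally. First I would fix the precondition $P \equiv (c < \mathtt{stNum} \land \mathtt{stNum} > 1)$ and do a case split on the guard $c = 0$ of the first conditional, since the postcondition is itself a disjunction keyed on whether the current cycle is a \emph{normal cycle} ($c = 0$) or a \emph{logic cycle} ($c \neq 0$).

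In the first branch, where $c = 0$ holds, the then-clause sets $\mathtt{exe} := \mathtt{true}$ and $n := 0$, and then the assignment $a := c$ records $a = 0$. I would then examine the second conditional: its guard is $c = \mathtt{stNum} - 1$, and since $\mathtt{stNum} > 1$ implies $\mathtt{stNum} - 1 \geq 1 > 0 = c$, this guard is false, so control takes the else-clause $c := a + 1 = 1$, followed by $x := c = 1$. Thus the exit state satisfies $x = 1 \land \mathtt{exe} = \mathtt{true} \land n = 0$, which is the left disjunct of the postcondition. In the second branch, where $c \neq 0$, the else-clause sets $\mathtt{exe} := \mathtt{false}$, leaving $n$ untouched, and $a := c$ preserves the original cycle value. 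The second conditional now genuinely branches: if $c = \mathtt{stNum} - 1$ (the last logic cycle), then $c := 0$ and $x := 0$; otherwise $c := a + 1$ and $x := a + 1$. Either way $(x = 0 \lor x = a + 1) \land \mathtt{exe} = \mathtt{false}$ holds, which is the right disjunct. Combining both branches establishes the full postcondition.

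The only subtlety worth attention — and what I would flag as the mildest obstacle — is confirming that the two conditionals interact correctly through the shared variable $c$ and the auxiliary $a$. The assignment $a := c$ occurs \emph{between} the two conditionals, so $a$ captures the value of $c$ after the first conditional has possibly reset $n$ and $\mathtt{exe}$ but before $c$ is overwritten in the second conditional; since neither then- nor else-clause of the first conditional modifies $c$, the value of $a$ equals the \emph{input} value of $c$, which is exactly what the postcondition's $x = a + 1$ term refers to. I would make this explicit to avoid any confusion about whether $a + 1$ is measured against the original or an intermediate $c$. Beyond that bookkeeping, the proof is a routine application of the assignment and conditional rules, and the precondition's role is solely to discharge the $c = \mathtt{stNum} - 1$ guard as false in the normal-cycle branch, guaranteeing the clean $x = 1$ outcome there.
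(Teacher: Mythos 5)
Your proof is correct, and it verifies exactly the three execution paths that matter, but it is organized differently from the paper's. The paper works backward: it applies the composition rule to split the program into $S_1;\ a:=c$ and $S_2;\ x:=c$, invents an explicit midpoint assertion $R \equiv (a=c \land a=0 \land \mathtt{exe}=\mathtt{true} \land n=0 \land \mathtt{stNum}>1) \lor (a=c \land a<\mathtt{stNum} \land \mathtt{exe}=\mathtt{false} \land \mathtt{stNum}>1)$, and then discharges the four resulting sub-triples by backward substitution (assignment axiom) together with the conditional rule. You work forward, case-splitting on $c=0$ versus $c\neq 0$ and symbolically executing each path to its final state; your initial case split plays the role of the paper's midpoint $R$, so you never need to construct it. Both methods are sound for loop-free code: the backward style is the more mechanical Hoare-logic presentation and scales better when the number of paths explodes, while your path enumeration is more transparent for a program with only three paths. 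One point genuinely in your favor: the subtlety you flag --- that the final value of $a$ equals the \emph{input} value of $c$, so the postcondition's $x=a+1$ refers to the original cycle number --- is precisely where the paper's own appendix slips. There the postcondition is restated with $x=c+1$ in place of the lemma's $x=a+1$; since the last assignment is $x:=c$, that disjunct becomes vacuous under substitution, and the resulting claimed implication $R \land c\neq \mathtt{stNum}-1 \rightarrow Q_3$, with $Q_3 \equiv (a=0 \land \mathtt{exe}=\mathtt{true} \land n=0) \lor (a=-1 \land \mathtt{exe}=\mathtt{false})$, is actually false for intermediate logic cycles (take $a=c=1$, $\mathtt{stNum}=3$, $\mathtt{exe}=\mathtt{false}$). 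Under your reading, which matches the lemma as stated, the argument goes through cleanly.
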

\begin{proof}
	According to the composition rule~\cite{book536}, the proof of the lemma is
	equivalent to prove the following two triples:
	\begin{align}
	\label{eq:initS1}
	\triple{P}{S_1;a:=c}{R}	
	\end{align}	
	\begin{align}
	\label{eq:initS2}
	\triple{R}{S_2;x:=c}{Q}	
	\end{align}
	where $P \equiv c < \mathtt{stNum} \land \mathtt{stNum}>1$,
	$Q \equiv (x=1 \land \mathtt{exe}=\mathtt{true} \land n=0) \lor ((x=0 \lor x=c+1) \land \mathtt{exe}=\mathtt{false})$,
	$R \equiv (a=c \land a=0 \land \mathtt{exe}=\mathtt{true} \land n=0 \land \mathtt{stNum}>1) \ \lor \ (a=c \land a < \mathtt{stNum} \land \mathtt{exe}=\mathtt{false} \land \mathtt{stNum}>1)$,
	\begin{align*}
	S_1 \equiv
	& \quad \kwif c=0 \ \kwthen \ \mathtt{exe}:=\mathtt{true}; \ n:=0; \\
	& \quad \kwelse \mathtt{exe}:=\mathtt{false}; \ \kwfi
	\end{align*}
	and
	\begin{align*}
	S_2 \equiv
	& \quad \kwif c=\mathtt{stNum}-1 \ \kwthen \ c:=0; \\
	& \quad \kwelse c:=a+1; \ \kwfi
	\end{align*}
	
	The detailed proof of triple~\eqref{eq:initS1}
	and triple~\eqref{eq:initS2} is given in Appendix~\ref{subsec:initEventQueueProof}.
\end{proof}

\begin{program}
	\label{pro:push}
	\begin{align*}
	\mathtt{push}\equiv
	& \quad E[n]:=e;\ S[n]:=s; \ n:=n+1
	\end{align*}
\end{program}

The input of the $\mathtt{push}$ program is an event $e$,
its sender $s$, the event array $E[]$, and the sender array $S[]$.
Suppose the number of raised \textit{events} before pushing event $e$
is $N$, then $n=N \land N \ge 0$.
The event $e$ and its sender $s$
are represented with positive integers, hence $e>0 \land s>0$.
The precondition of the $\mathtt{push}$ program is $n=N \land N \ge 0 \land e>0 \land s>0$.
The $\mathtt{push}$ program pushes an event and its sender into the queues
and increase the number of raised \textit{events} by 1.
The event array $E[]$ and the sender array $S[]$ have
the same size $n$, i.e., the number of raised \textit{events}.
As the index number of arrays implementing queues starts from $0$,
the postcondition of the $\mathtt{push}$ program is $E[n-1]=e \land S[n-1]=s \land n=N+1$. 
We prove the correctness of the $\mathtt{push}$ program in Lemma~\ref{lm:push}.

\begin{lemma}
	\label{lm:push}
	$\triple{n=N \land N \ge 0 \land e>0 \land s>0}{\mathtt{push}}{E[n-1]=e \land S[n-1]=s \land n=N+1}$
\end{lemma}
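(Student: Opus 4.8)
The plan is to prove this Hoare triple by a straightforward backward application of the assignment axiom, since $\mathtt{push}$ is a sequential composition of three simple assignments with no branching or loops. I would invoke the composition rule to split the triple into three subgoals linked by intermediate assertions, and discharge each link via the assignment axiom, computing the weakest precondition by pushing the postcondition backward through the statements in reverse order.

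Concretely, starting from $Q \equiv E[n-1]=e \land S[n-1]=s \land n=N+1$, I would first pass backward through $n:=n+1$: substituting $n+1$ for $n$ in $Q$ yields the intermediate assertion $R_2 \equiv E[n]=e \land S[n]=s \land n=N$. Next, through the array write $S[n]:=s$: applying the array-update substitution (replacing $S$ by the array that agrees with $S$ everywhere except at index $n$, where it holds $s$), the conjunct $S[n]=s$ collapses to the trivially true $s=s$, leaving $R_1 \equiv E[n]=e \land n=N$. Finally, through $E[n]:=e$: by the same array-update rule, $E[n]=e$ collapses to $e=e$, leaving the weakest precondition $R_0 \equiv n=N$.

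Having derived $n=N$, I would close the argument with the consequence rule: the stated precondition $n=N \land N\ge 0 \land e>0 \land s>0$ plainly implies $n=N$, so the triple holds. It is worth remarking that the extra conjuncts $N\ge 0$, $e>0$, and $s>0$ are not actually used to establish $Q$; they are carried along for consistency with the surrounding invariants of the model pattern (and to guarantee that the index $n=N$ is nonnegative, so the writes fall within bounds) rather than playing any role in the derivation itself.

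The only step I expect to require genuine care is the handling of the array-assignment axiom, since substituting for an array write $A[i]:=v$ must in general account for possible aliasing between distinct index expressions. Here, however, both array writes and the relevant array accesses in the postcondition all refer to the single index $n$, whose value is unchanged until the final statement (which comes after both writes); consequently no two distinct indices ever need to be compared, the substitutions collapse cleanly, and no aliasing complication arises. I would state this observation explicitly to keep the backward derivation fully rigorous.
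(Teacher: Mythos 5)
Your proposal is correct and follows essentially the same route as the paper's proof: backward substitution of the postcondition through the three assignments (collapsing $S[n]=s$ and $E[n]=e$ to trivial equalities), arriving at the weakest precondition $n=N$, and closing with the consequence rule since the stated precondition implies it. Your added remarks on array-update aliasing and the unused conjuncts are sound but only make explicit what the paper's substitution steps already assume.
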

\begin{proof}
	\begin{align*}
	& \quad E[n-1]=e \land S[n-1]=s \ \land \\
	& \quad n=N+1 [n:=n+1][S[n]:=s][E[n]:=e]\\
	& \equiv E[n]=e \land S[n]=s \land n=N [S[n]:=s][E[n]:=e]\\
	& \equiv E[n]=e \land n=N [E[n]:=e]\\
	& \equiv n=N
	\end{align*}
	
	As $n=N \land N \ge 0 \land e>0 \land s>0 \rightarrow n=N$, hence the lemma is correct.
\end{proof}

\begin{program}
	\label{pro:pop}
	\begin{align*}
	\mathtt{pop} \equiv
	& \quad x:=\mathtt{false}; \ i:=0; \ v:=0; \\
	& \quad \kwwhile i < n \ \kwdo \\
	& \quad \qquad \kwif E[i] = e \land ((\mathtt{exe}=\mathtt{true} \land r > S[i]) \ \lor \\
	& \quad \qquad \qquad (\mathtt{exe}=\mathtt{false} \land r < S[i])) \ \kwthen \\
	& \quad \qquad \qquad v:=i; \ x := \mathtt{true}; \ \kwfi; \\
	& \quad \qquad i:=i+1; \\
	& \quad \kwod
	\end{align*} 
\end{program}

The input of the $\mathtt{pop}$ program is an event and
its receiver, hence, the precondition of the $\mathtt{pop}$ program is $e>0 \land r>0$.
The $\mathtt{pop}$ program checks if an event is valid to a receiver during current execution cycle.
In a \textit{normal cycle}, events raised by higher priority stetecharts are
valid, i.e., $x=\mathtt{true} \land E[v]=e \land 0 \le i \le n \land \mathtt{exe}=\mathtt{true} \land r>S[v]$, where $v$ is event $e$'s index in event array $E$.
Note that a \textit{event} $e$ and its sender $s$ have the same index
in the event array $E[]$ and the sender array $S[]$, respectively.
While in an \textit{logic cycle}, events raised by lower priority stetecharts are valid,
i.e., $x=\mathtt{true} \land E[v]=e \land 0 \le i \le n \land \mathtt{exe}=\mathtt{false} \land r<S[v]$.
Note that the higher priority a statechart has, the lower its identity is.
If the input event is not raised, the $\mathtt{pop}$ program returns $\mathtt{false}$,
i.e., $x=\mathtt{false}$.
Therefore, the postcondition of the $\mathtt{pop}$ program is
$x=\mathtt{false} \lor (x=\mathtt{true} \land E[v]=e \land ((\mathtt{exe}=\mathtt{true} \land r>S[v]) \lor (\mathtt{exe}=\mathtt{false} \land r<S[v])))$.
We prove the correctness of the $\mathtt{pop}$ program in Lemma~\ref{lm:pop}.

\begin{lemma}
	\label{lm:pop}
	$\triple{e>0 \land r>0}{\mathtt{pop}}{x=\mathtt{false} \lor (x=\mathtt{true} \land E[v]=e \land ((\mathtt{exe}=\mathtt{true} \land r>S[v]) \lor (\mathtt{exe}=\mathtt{false} \land r<S[v])))}$
\end{lemma}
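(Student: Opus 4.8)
The plan is to prove the Hoare triple by following the syntactic structure of the $\mathtt{pop}$ program: first apply the composition rule~\cite{book536} to separate the initialization block $x:=\mathtt{false}; i:=0; v:=0$ from the \kwwhile{} loop, then discharge the loop with the while rule. The central observation is that the loop invariant can be taken to be the postcondition itself. Writing $Q \equiv x=\mathtt{false} \lor (x=\mathtt{true} \land E[v]=e \land ((\mathtt{exe}=\mathtt{true} \land r>S[v]) \lor (\mathtt{exe}=\mathtt{false} \land r<S[v])))$, I would use $I \equiv Q$ as the invariant. Because the triple asserts only partial correctness, no bookkeeping on the bounds of $i$ is needed, and since $Q$ mentions neither $i$ nor $n$, the loop counter is invisible to the invariant.

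With this choice the three obligations of the while rule become light. For the initialization block, the assignment $x:=\mathtt{false}$ makes the first disjunct of $Q$ true, so $\triple{e>0 \land r>0}{x:=\mathtt{false}; i:=0; v:=0}{Q}$ follows by the assignment and composition rules regardless of the values of $i$, $v$, and $n$. For loop exit, $I \land \lnot(i<n) \Rightarrow Q$ is immediate since $I$ is literally $Q$. The substance is therefore the body-preservation triple $\triple{Q \land i<n}{\text{if-statement};\ i:=i+1}{Q}$.

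To establish body preservation I would again split with the composition rule. Applying the backward assignment rule to $i:=i+1$, and using that $Q$ is free of $i$, the required intermediate assertion is exactly $Q$, so the increment is transparent and it remains to show $\triple{Q \land i<n}{\text{if-statement}}{Q}$. Here I apply the conditional rule with a case split on the guard $C \equiv E[i]=e \land ((\mathtt{exe}=\mathtt{true} \land r>S[i]) \lor (\mathtt{exe}=\mathtt{false} \land r<S[i]))$. The empty \kwelse{} branch preserves $Q$ immediately. For the \kwthen{} branch $v:=i; x:=\mathtt{true}$, the backward assignment rule yields the weakest precondition $Q[x:=\mathtt{true}][v:=i]$: substituting $x:=\mathtt{true}$ falsifies the first disjunct and validates $x=\mathtt{true}$ in the second, leaving $E[v]=e \land ((\mathtt{exe}=\mathtt{true} \land r>S[v]) \lor (\mathtt{exe}=\mathtt{false} \land r<S[v]))$, and the further substitution $[v:=i]$ rewrites $E[v],S[v]$ into $E[i],S[i]$, producing precisely the guard $C$. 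Hence the verification condition reduces to the trivial implication $Q \land i<n \land C \Rightarrow C$.

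The part that demands the most care is this final case analysis, since the guard and the postcondition both carry the nested $\mathtt{exe}=\mathtt{true}$ / $\mathtt{exe}=\mathtt{false}$ disjunction, and I must track how the array accesses $E[v]$ and $S[v]$ are transported to index $i$ under $[v:=i]$ so that the validity condition recorded at $v$ coincides with the condition tested at $i$. Once the weakest precondition of the \kwthen{} branch is recognized to be the guard $C$ verbatim, every verification condition collapses to a trivial implication, so I expect no genuine difficulty beyond carefully bookkeeping the substitutions. The detailed discharge of these verification conditions would be deferred to the appendix, mirroring the treatment of Lemma~\ref{lm:initEventQueue}.
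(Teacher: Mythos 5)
The core of your argument is sound and coincides with the heart of the paper's proof: the paper, too, observes that the weakest precondition of the branch $v:=i;\ x:=\mathtt{true}$ under the postcondition is exactly the loop-body guard, so that the preservation check collapses to a trivial implication, and your initialization and exit obligations are precisely its Steps 1 and 5. The genuine gap is that you prove only partial correctness. The paper's proof follows the five-part loop checklist of~\cite{book536}, which in addition to your three obligations requires a termination argument: its Step 3 shows that the bound function $\mathtt{bd} \equiv n-i$ strictly decreases across each iteration, and its Step 4 shows that the invariant implies $\mathtt{bd} \ge 0$. Your justification for skipping these---that the triple ``asserts only partial correctness''---is an assumption about the notation that the paper's own proof contradicts; under the paper's reading of $\triple{P}{S}{Q}$, a proof without the bound-function obligations is incomplete.

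Note also that your two deviations are coupled, so the repair is not merely to append a termination paragraph. You chose $I \equiv Q$, discarding the conjunct $0 \le i \le n$ that the paper keeps in its invariant $\mathtt{inv} \equiv 0 \le i \le n \land Q$. That is legitimate for partial correctness (and even slightly cleaner, since $Q$ mentions neither $i$ nor $n$), but it cannot support Step 4: $Q$ alone does not imply $n-i \ge 0$. The bounds on $i$ sit in the paper's invariant precisely so that the bound function is provably non-negative while the loop runs. To complete the proof in the paper's framework you must restore $0 \le i \le n$ to the invariant, re-check your initialization and preservation obligations with it (routine: $i:=0$ establishes it since the queue size $n$ is non-negative, and the loop guard $i<n$ together with $i:=i+1$ preserves it), and then discharge Steps 3 and 4 exactly as the paper does.
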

\begin{proof}
	The loop invariant and bound function of the $\kwwhile$ loop
	in the $\mathtt{pop}$ program are
	$\mathtt{\textbf{inv}} \equiv 0 \le i \le n \land (x=\mathtt{false} \lor (x=\mathtt{true} \land E[v]=e \land ((\mathtt{exe}=\mathtt{true} \land r>S[v]) \lor (\mathtt{exe}=\mathtt{false} \land r<S[v]))))$
	and
	$\mathtt{\textbf{bd}} \equiv n-i$, respectively.

	We use $P$ and $Q$ to denote the precondition and postcondition
	of the triple, i.e.,
	$P \equiv e>0 \land r>0$ and
	$Q \equiv x=\mathtt{false} \lor (x=\mathtt{true} \land E[v]=e \land ((\mathtt{exe}=\mathtt{true} \land r>S[v]) \lor (\mathtt{exe}=\mathtt{false} \land r<S[v])))$.
	We prove the lemma in the following five steps.
	
	~\newline
	\textbf{Step 1:} prove that the initialization establishes
	the loop invariant, i.e., $\triple{P}{x:=\mathtt{false}; i:=0; v:=0}{\mathtt{inv}}$.
	

	~\newline
	\textbf{Step 2:} prove that the loop body does not change
	the loop invariant, i.e., $\triple{\mathtt{inv} \land i < n}{S}{\mathtt{inv}}$,
	where
	\begin{align*}
	S \equiv	
	& \quad \kwif E[i] = e \land (\mathtt{exe}=\mathtt{true} \land r > S[i]) \ \lor \\
	& \quad \qquad (\mathtt{exe}=\mathtt{false} \land r < S[i]) \ \kwthen \\
	& \quad \qquad v:=i; \ x := \mathtt{true}; \ \kwfi; \\
	& \quad i:=i+1
	\end{align*}

	~\newline
	\textbf{Step 3:} prove that the bound function decreases after each iteration,
	i.e., $\triple{\mathtt{inv} \land i < n \land \mathtt{bd} = z}{S}{\mathtt{bd} < z}$,
	where $z$ is an integer.

	~\newline	
	\textbf{Step 4:} prove that the loop invariant implies the bound
	function is non-negative, i.e., $\mathtt{inv} \rightarrow \mathtt{bd} \ge 0$.
	
	
	~\newline
	\textbf{Step 5:} prove that the postcondition holds when the
	$\kwwhile$ loop terminates, i.e., $\mathtt{inv} \land \neg (i < n) \rightarrow Q$.
	

	~\newline
	The detailed proof of each step is given in Appendix~\ref{subsec:popProof}.
%
\end{proof}

\begin{program}
	\label{pro:isNormalExe}
	\begin{align*}
	\mathtt{isNormalExe}\equiv \quad x:=\mathtt{exe}
	\end{align*}
\end{program}

The $\mathtt{isNormalExe}$ program does not have input, hence,
the precondition is $\mathtt{true}$.
As the $\mathtt{isNormalExe}$ program checks if the current execution cycle
is a \textit{normal cycle}, the postcondition is $x=\mathtt{exe}$. 
We prove the correctness of the $\mathtt{isNormalExe}$ program in Lemma~\ref{lm:isNormalExe}.

\begin{lemma}
	\label{lm:isNormalExe}
	$\triple{\mathtt{true}}{\mathtt{isNormalExe}}{x=\mathtt{exe}}$
\end{lemma}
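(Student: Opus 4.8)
The plan is to prove this triple by a single application of the assignment axiom of Hoare logic~\cite{book536}, since $\mathtt{isNormalExe}$ consists of the lone assignment $x := \mathtt{exe}$. The assignment axiom states that $\triple{Q[x := E]}{x := E}{Q}$ holds for any postcondition $Q$ and expression $E$; here I would instantiate $Q \equiv x = \mathtt{exe}$ and $E \equiv \mathtt{exe}$, so that the whole proof reduces to computing a backward substitution and checking the resulting assertion.

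First I would compute the weakest precondition by substituting $\mathtt{exe}$ for $x$ in the postcondition: $(x = \mathtt{exe})[x := \mathtt{exe}]$ yields $\mathtt{exe} = \mathtt{exe}$, which is a tautology and hence equivalent to $\mathtt{true}$. The assignment axiom then immediately gives $\triple{\mathtt{true}}{x := \mathtt{exe}}{x = \mathtt{exe}}$. Because the computed weakest precondition coincides exactly with the stated precondition $\mathtt{true}$, no strengthening via the rule of consequence is required, and the triple follows directly.

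There is consequently no real obstacle in this lemma: it is the degenerate case of the assignment rule, and the only nominal check is that the substitution leaves the postcondition a valid identity, which it does because $\mathtt{exe}$ is merely copied into $x$ without any modification. I include it for completeness, to formally establish that the $\mathtt{isNormalExe}$ function faithfully reports the current-cycle flag $\mathtt{exe}$, closing out the suite of Hoare triples (Lemma~\ref{lm:initEventQueue}--\ref{lm:isNormalExe}) that together certify the two-way communication interface implementation.
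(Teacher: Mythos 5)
Your proof is correct and follows exactly the paper's own argument: a single backward substitution $(x=\mathtt{exe})[x:=\mathtt{exe}] \equiv \mathtt{exe}=\mathtt{exe} \equiv \mathtt{true}$, followed by the (trivial) observation that the stated precondition $\mathtt{true}$ matches this weakest precondition. Nothing is missing; the lemma is indeed the degenerate case of the assignment axiom.
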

\begin{proof}
	\begin{align*}
	& \quad x=\mathtt{exe}[x:=\mathtt{exe}] \\
	& \equiv \mathtt{exe}=\mathtt{exe} \\
	& \equiv \mathtt{true}
	\end{align*}
\end{proof}

After proving that the Java implementation is correct,
we prove the correctness of the two-way communication model pattern in
Theorem~\ref{thm:TWC}.

\begin{theorem}
	\label{thm:TWC}
	The two-way communication model pattern is correct, i.e.,	
	two statecharts can send \textit{events} to each other.
\end{theorem}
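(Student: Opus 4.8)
The plan is to build directly on the four implementation lemmas (Lemma~\ref{lm:initEventQueue} through Lemma~\ref{lm:isNormalExe}), which already establish that $\mathtt{initEventQueue}$, $\mathtt{push}$, $\mathtt{pop}$, and $\mathtt{isNormalExe}$ behave as specified, and then to reason at the level of statechart execution semantics about how these functions cooperate once Procedure~\ref{prc:applyTWC} has been applied. Following the two-step strategy announced at the beginning of Section~\ref{sec:proof}, the Java-correctness step is discharged by the lemmas, and the theorem itself becomes the semantic step: I would show that for the two statecharts $\mathtt{S1}$ and $\mathtt{S2}$ of Fig.~\ref{fig:TWCex1}, each can deliver an event to the other. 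Without loss of generality assume $\mathtt{S1}$ has the higher priority, i.e.\ the smaller identity.

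First I would handle the higher-to-lower direction, which ordinary priority-based semantics already supports and which the pattern must merely preserve. When $\mathtt{S1}$ raises $\mathtt{EA}$, Step~2 rewrites the raise into $\mathtt{push}(\mathtt{EA},\mathtt{S1})$, so by Lemma~\ref{lm:push} the pair is queued; later in the \emph{same} normal cycle $\mathtt{S2}$ evaluates its guard, rewritten by Step~3 into $\mathtt{pop}(\mathtt{EA},\mathtt{S2})$. Since $\mathtt{exe}=\mathtt{true}$ in a normal cycle (Lemma~\ref{lm:initEventQueue}) and the receiver identity exceeds the sender identity, Lemma~\ref{lm:pop} returns $\mathtt{true}$ and $\mathtt{S2}$ receives $\mathtt{EA}$. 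Next I would handle the lower-to-higher direction, which is the capability the pattern is designed to add. When $\mathtt{S2}$ raises $\mathtt{EB}$, the higher-priority $\mathtt{S1}$ has already executed in that normal cycle and cannot see $\mathtt{EB}$, but $\mathtt{push}(\mathtt{EB},\mathtt{S2})$ nonetheless queues it. In the following \emph{logic} cycle, Lemma~\ref{lm:initEventQueue} guarantees $\mathtt{exe}=\mathtt{false}$ and that the queue is \emph{not} cleared, so $\mathtt{S1}$'s rewritten guard $\mathtt{pop}(\mathtt{EB},\mathtt{S1})$ satisfies the logic-cycle branch of Lemma~\ref{lm:pop} (receiver identity below sender identity) and returns $\mathtt{true}$. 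Thus $\mathtt{S1}$ receives $\mathtt{EB}$, and combining the two directions gives two-way communication.

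The hard part will be arguing that the inserted logic cycles are \emph{conservative}: every low-to-high event must be delivered exactly once, and no transition that is not event-triggered may fire spuriously during a logic cycle. For exactly-once delivery I would use the complementary priority tests inside $\mathtt{pop}$ --- higher-priority senders are accepted only when $\mathtt{exe}=\mathtt{true}$ and lower-priority senders only when $\mathtt{exe}=\mathtt{false}$ --- so that the normal and logic cycles partition the deliveries and no event is consumed twice. For the absence of spurious firings I would invoke Step~3 of Procedure~\ref{prc:applyTWC}, which conjoins every event-free guard with $\mathtt{TWC.isNormalExe()}$; by Lemma~\ref{lm:isNormalExe} this evaluates to $\mathtt{false}$ in logic cycles, disabling all time- and condition-based transitions there and leaving the model's behavior outside event delivery unchanged.

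A secondary subtlety I would address is completeness of propagation when reactions chain: an event re-raised in a logic cycle may itself trigger a further low-to-high event, so I must show that the $n-1$ logic cycles suffice to flush the longest possible priority chain. I would justify this bound by induction on chain length, observing that each logic cycle advances the frontier of delivered events up by at least one priority level, so $n-1$ cycles cover all $n$ statecharts. Finally, as the second step of the overall strategy, I would corroborate these semantic arguments by transforming the pattern to timed automata using the \toolname\ tool and verifying the bidirectional-delivery property in \uppaal, providing machine-checked assurance alongside the hand proof.
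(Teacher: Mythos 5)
Your proposal is correct in substance, but it proves the key step by a genuinely different route than the paper. The paper's proof, after invoking the same four Java-correctness lemmas and the transformation correctness from~\cite{Guo2016ICCPS}, does \emph{not} reason by hand about cycles and event delivery: it reduces the theorem to two reachability properties in the model of Fig.~\ref{fig:TWCex1} (state $\mathtt{B2}$ reachable, witnessing delivery of $\mathtt{EA}$ to $\mathtt{S2}$, and state $\mathtt{C1}$ reachable, witnessing delivery of $\mathtt{EB}$ to $\mathtt{S1}$), transforms the pattern-equipped model to timed automata, and discharges both properties by \uppaal\ model checking of the TCTL formulas $E<> \ \mathtt{S1.C1}$ and $E<> \ \mathtt{S2.B2}$. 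What you treat as secondary corroboration is, in the paper, the entire semantic step; conversely, the operational argument you place at the center --- downward delivery in the normal cycle, upward delivery in the logic cycle, exactly-once delivery via the complementary priority tests in $\mathtt{pop}$, suppression of spurious firings via $\mathtt{isNormalExe()}$, and the induction showing $n-1$ logic cycles flush any chain of re-raised events --- is precisely the reasoning the paper delegates to the model checker. Your route buys generality and insight: it explains why the pattern works for any number of statecharts and for chained reactions, whereas the paper's machine-checked argument certifies only the concrete instance of Fig.~\ref{fig:TWCex1}, a limitation the Discussion section itself acknowledges by requiring re-verification whenever the pattern accompanies a new model. The price is rigor: your argument rests on an informal account of \yakindu's priority-based synchronous semantics, so to complete it you would need to formalize that semantics and carry out the frontier induction in detail --- exactly the effort the paper sidesteps by leaning on its previously verified transformation plus \uppaal.
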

\begin{proof}
	For the statechart model shown in Fig.~\ref{fig:TWCex1},	
	the communication from statechart $\mathtt{S1}$ to $\mathtt{S2}$
	means that statechart $\mathtt{S2}$ can receive \textit{event} $\mathtt{EA}$ sent
	by statechart $\mathtt{S1}$ and reach state $\mathtt{B2}$. Similarly,
	the communication from statechart $\mathtt{S2}$ to $\mathtt{S1}$
	is that statechart $\mathtt{S1}$ can receive \textit{event} $\mathtt{EB}$ sent
	by statechart $\mathtt{S2}$ and reach state $\mathtt{C1}$.
	Hence, to prove the correctness of the two-way communication model pattern
	is equivalent to prove that both state $\mathtt{C1}$ and state $\mathtt{B2}$
	are reachable.
	
	We have prove the correctness of transformation of statechart elements
	in~\cite{Guo2016ICCPS} and correctness of the Java implementation
	in Lemma~\ref{lm:initEventQueue}-\ref{lm:isNormalExe}.
	Hence, we can transform the statechart model in Fig.~\ref{fig:TWCex1} with two-way
	communication model pattern to timed automata to verify the properties.
	The two state reachability properties can be formally verified with
	TCTL (timed computation tree logic)~\cite{behrmann2004tutorial} formulas $E<> \ \mathtt{S1.C1}$
	and $E<> \ \mathtt{S2.B2}$, respectively.
	The verification results show that both properties are satisfied.
	Therefore, the two-way communication model pattern is correct.
\end{proof}

\subsection{Configurable Execution Order Model Pattern}
\label{subsec:EOMproof}

We represent the Java functions $\mathtt{updateExeInfo()}$, and
$\mathtt{run()}$
by WHILE programs Program~\ref{pro:updateExeInfo} and Program~\ref{pro:run}, respectively.
The variables used to implement the configurable execution order model pattern 
are listed in Table~\ref{tab:EOM}.
We construct Hoare triples and prove their correctness in
Lemma~\ref{lm:updateExeInfo}-\ref{lm:run}.

\begin{table*}[h]
	\caption{Variables in Configurable Execution Order Model Pattern Implementation} 
	\label{tab:EOM}
	\centering 
	\begin{tabular}{ | l | l | c | l | } \hline 
		\textbf{Variable in \yakindu\ Java Code} & \textbf{Variable in WHILE Program} & \textbf{Variable Type} & \textbf{Variable Meaning} \\ \hline  
		$\mathtt{exeOrders[]}$ & $O[]$ & int array & statechart execution orders \\ \hline
		$\mathtt{exeIndex}$ & $t$ & int & execution token \\ \hline
		$\mathtt{stNum}$ & $\mathtt{stNum}$ & int & statechart number of the given model \\ \hline		
		$\mathtt{st}$ & $\mathtt{st}$ & int & statechart identity \\ \hline	
		$a$ & $a$ & int & temporary variable \\ \hline		
		$x$ & $x$ & int/bool & the return of functions \\ \hline
	\end{tabular} 
\end{table*}

\begin{program}
	\label{pro:updateExeInfo}
	\begin{align*}
	\mathtt{updateExeInfo}\equiv
	& \quad \kwif t = \mathtt{stNum} \ \kwthen \ a := 1; \\
	& \quad \kwelse a := t + 1; \ \kwfi;\\
	& \quad x := a
	\end{align*} 
\end{program}

The input of the $\mathtt{updateExeInfo}$ program is the statechart number
and the execution token, hence, the
precondition of the $\mathtt{updateExeInfo}$ program is $t>0 \land \mathtt{stNum}>0$.
The $\mathtt{updateExeInfo}$ program updates the execution token.
If the current execution token is equal to the statechart number, then
the program assigns the token to be $1$, i.e.,
$t=\mathtt{stNum} \land x=1$; otherwise, the program
increases the token by $1$, i.e., $t \ne \mathtt{stNum} \land x=t+1$.
Hence, the postcondition of the $\mathtt{updateExeInfo}$ program is
$(t=\mathtt{stNum} \land x=1) \lor (t \ne \mathtt{stNum} \land x=t+1)$.
We prove the correctness of the $\mathtt{updateExeInfo}$ program in Lemma~\ref{lm:updateExeInfo}.

\begin{lemma}
	\label{lm:updateExeInfo}
	$\triple{t>0 \land \mathtt{stNum}>0}{\mathtt{updateExeInfo}}{(t=\mathtt{stNum} \land x=1) \lor (t \ne \mathtt{stNum} \land x=t+1)}$
\end{lemma}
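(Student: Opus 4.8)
The plan is to follow exactly the decomposition used for the structurally identical Lemma~\ref{lm:initEventQueue}, applying the Hoare composition rule~\cite{book536} to split Program~\ref{pro:updateExeInfo} into the conditional block $S_1 \equiv \kwif t = \mathtt{stNum} \ \kwthen \ a := 1; \ \kwelse a := t + 1; \ \kwfi$ followed by the final assignment $S_2 \equiv x := a$. Writing $P \equiv t>0 \land \mathtt{stNum}>0$ for the precondition and $Q \equiv (t=\mathtt{stNum} \land x=1) \lor (t \ne \mathtt{stNum} \land x=t+1)$ for the postcondition, it suffices to exhibit an intermediate assertion $R$ and prove the two triples $\triple{P}{S_1}{R}$ and $\triple{R}{S_2}{Q}$.

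I would resolve the second triple first, since it fixes the choice of $R$. Applying the assignment axiom to $S_2 \equiv x:=a$ amounts to substituting $a$ for $x$ in $Q$, so the weakest precondition is $Q[x:=a] \equiv (t=\mathtt{stNum} \land a=1) \lor (t \ne \mathtt{stNum} \land a=t+1)$, and I take $R$ to be exactly this formula. For the first triple I apply the conditional rule, reducing it to the two branch obligations $\triple{P \land t=\mathtt{stNum}}{a:=1}{R}$ and $\triple{P \land t \ne \mathtt{stNum}}{a:=t+1}{R}$. Each branch is discharged by one further backward substitution: $R[a:=1]$ simplifies to $(t=\mathtt{stNum}) \lor (t \ne \mathtt{stNum} \land t=0)$, which is implied by $t=\mathtt{stNum}$ through its first disjunct; and $R[a:=t+1]$ simplifies to $(t=\mathtt{stNum} \land t=0) \lor (t \ne \mathtt{stNum})$, which is implied by $t \ne \mathtt{stNum}$ through its second disjunct. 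Both implications then close the triple via the rule of consequence.

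Because the program contains no loop, there is no invariant, bound function, or termination argument to supply, so the entire proof is straightforward backward reasoning of the same flavor as Lemma~\ref{lm:push} and Lemma~\ref{lm:isNormalExe}, and I would likewise relegate the routine substitution calculations to the appendix. The only point worth flagging is that the positivity hypotheses $t>0$ and $\mathtt{stNum}>0$ are not actually consumed by either branch implication; they record the semantic domain, namely that the execution token and the statechart count are positive integers, rather than carrying logical weight. Consequently the main ``obstacle'' is merely the clerical care of tracking the two substitutions and confirming the correct disjunct in each branch, and I expect no genuine difficulty.
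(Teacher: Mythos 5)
Your proposal is correct and follows essentially the same route as the paper's proof: the identical composition-rule split into the conditional $S_1$ and the assignment $x:=a$, the identical intermediate assertion $R \equiv (t=\mathtt{stNum} \land a=1) \lor (t \ne \mathtt{stNum} \land a=t+1)$ obtained by backward substitution, and the identical branch obligations discharged by the same simplifications. Your side remark that the positivity hypotheses $t>0 \land \mathtt{stNum}>0$ are never actually used is also accurate, since each branch condition alone implies the corresponding disjunct.
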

\begin{proof}
	According to the composition rule~\cite{book536}, the proof of the lemma is
	equivalent to prove the following two triples:
	\begin{align}
	\label{eq:updateExeInfoS1}
	\triple{P}{S}{R}	
	\end{align}
	\begin{align}
	\label{eq:updateExeInfoS2}
	\triple{R}{x:=a}{Q}	
	\end{align}
	where $P \equiv t>0 \land \mathtt{stNum}>0$,
	$Q \equiv (t=\mathtt{stNum} \land x=1) \lor (t \ne \mathtt{stNum} \land x=t+1)$,
	$R \equiv (t=\mathtt{stNum} \land a=1) \lor (t \ne \mathtt{stNum} \land a=t+1)$,
	and
	\begin{align*}
	S \equiv
	& \quad \kwif t = \mathtt{stNum} \ \kwthen \ a := 1; \\
	& \quad \kwelse a := t + 1; \ \kwfi
	\end{align*}
	
	The detailed proof of triple~\eqref{eq:updateExeInfoS1}
	and triple~\eqref{eq:updateExeInfoS2} is given in Appendix~\ref{subsec:updateExeInfoProof}.
\end{proof}

\begin{program}
	\label{pro:run}
	\begin{align*}
	\mathtt{run}\equiv
	& \quad \kwif O[t-1] = \mathtt{st} \ \kwthen \ x := \mathtt{true}; \\
	& \quad \kwelse x := \mathtt{false}; \ \kwfi
	\end{align*}
\end{program}

The input of the $\mathtt{run}$ program is a statechart's identity and
an execution token.
As we define that the minimal statechart identity and minimal generated execution token are both $1$, hence
the precondition of the $\mathtt{run}$ program is $t>0 \land \mathtt{st}>0$.
The $\mathtt{run}$ program checks if a statechart is valid to execute, i.e.,
the statechart's identity matches the current execution token.
The array $O[]$ stores statechart identities sorted in their execution orders.
The index number of the array $O[]$ starts from $0$.
Hence, the postcondition of the $\mathtt{run}$ program is
$(x=\mathtt{true} \land O[t-1] = \mathtt{st}) \lor (x=\mathtt{false} \land O[t-1] \ne \mathtt{st})$.
We prove the correctness of the $\mathtt{run}$ program in Lemma~\ref{lm:run}.

\begin{lemma}
	\label{lm:run}
	$\triple{t>0 \land \mathtt{st} > 0}{\mathtt{run}}{(x=\mathtt{true} \land O[t-1] = \mathtt{st}) \lor (x=\mathtt{false} \land O[t-1] \ne \mathtt{st})}$
\end{lemma}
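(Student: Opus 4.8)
The plan is to prove this triple by a single application of the conditional rule for \kwif-\kwthen-\kwelse, exactly as in the proof of Lemma~\ref{lm:updateExeInfo}, with no composition, loop invariant, or bound function needed since $\mathtt{run}$ is a loop-free program. Write $P \equiv t>0 \land \mathtt{st}>0$ for the precondition, $Q \equiv (x=\mathtt{true} \land O[t-1]=\mathtt{st}) \lor (x=\mathtt{false} \land O[t-1]\ne\mathtt{st})$ for the postcondition, and $B \equiv O[t-1]=\mathtt{st}$ for the branch guard, so that $\mathtt{run}$ has the shape $\kwif B \ \kwthen \ x:=\mathtt{true} \ \kwelse \ x:=\mathtt{false} \ \kwfi$. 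The conditional rule then reduces the lemma to the two branch triples $\triple{P \land B}{x:=\mathtt{true}}{Q}$ and $\triple{P \land \neg B}{x:=\mathtt{false}}{Q}$, which I would discharge separately.

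For the then-branch I would apply the assignment axiom backward and compute the substitution
\begin{align*}
Q[x:=\mathtt{true}] &\equiv (\mathtt{true}=\mathtt{true} \land O[t-1]=\mathtt{st}) \lor (\mathtt{true}=\mathtt{false} \land O[t-1]\ne\mathtt{st}) \\
&\equiv O[t-1]=\mathtt{st},
\end{align*}
and then invoke the consequence rule, since $P \land B$ contains $O[t-1]=\mathtt{st}$ as a conjunct and hence implies it. Symmetrically, for the else-branch the substitution yields
\begin{align*}
Q[x:=\mathtt{false}] &\equiv (\mathtt{false}=\mathtt{true} \land O[t-1]=\mathtt{st}) \lor (\mathtt{false}=\mathtt{false} \land O[t-1]\ne\mathtt{st}) \\
&\equiv O[t-1]\ne\mathtt{st},
\end{align*}
which is implied by $P \land \neg B$, again by the consequence rule. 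Combining the two branch triples through the conditional rule gives the claimed triple.

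Since this is a single conditional over two assignments with no iteration, I do not anticipate any genuine obstacle; the argument is entirely mechanical and termination is immediate. The only point that requires a little care is the propositional simplification of the two substituted formulas, where the contradictory conjuncts $\mathtt{true}=\mathtt{false}$ and $\mathtt{false}=\mathtt{true}$ must be collapsed to $\mathtt{false}$ so that each disjunction reduces to its single surviving guard condition. Everything else follows the template already established for the assignment and conditional cases in Lemma~\ref{lm:push} and Lemma~\ref{lm:updateExeInfo}, and the routine verification of the two branch triples would be relegated to an appendix as in those cases.
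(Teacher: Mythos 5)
Your proposal is correct and follows essentially the same route as the paper's proof: both decompose $\mathtt{run}$ via the conditional rule into the two branch triples $\triple{P \land O[t-1]=\mathtt{st}}{x:=\mathtt{true}}{Q}$ and $\triple{P \land O[t-1]\ne\mathtt{st}}{x:=\mathtt{false}}{Q}$, discharge each by backward substitution through the assignment axiom (collapsing the contradictory disjunct), and close with the consequence rule using the trivial implications from the strengthened preconditions. The computations of $Q[x:=\mathtt{true}]$ and $Q[x:=\mathtt{false}]$ match the paper's appendix proof exactly.
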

\begin{proof}
	The proof of the lemma is equivalent to prove the following two triples:
	\begin{align}
	\label{eq:runS1}
	\triple{P \land O[t-1] = \mathtt{st}}{x := \mathtt{true}}{Q}	
	\end{align}	
	\begin{align}
	\label{eq:runS2}
	\triple{P \land O[t-1] \ne \mathtt{st}}{x := \mathtt{false}}{Q}	
	\end{align}
	where $P \equiv t>0 \land \mathtt{st} > 0$ and 
	$Q \equiv (x=\mathtt{true} \land O[t-1] = \mathtt{st}) \lor (x=\mathtt{false} \land O[t-1] \ne \mathtt{st})$.
	
	The detailed proof of triple~\eqref{eq:runS1}
	and triple~\eqref{eq:runS2} is given in Appendix~\ref{subsec:runProof}.
%
%
%
\end{proof}

After proving that the Java implementation is correct,
we prove the correctness of the configurable execution order model pattern in
Theorem~\ref{thm:EOM}.

\begin{theorem}
	\label{thm:EOM}
	The configurable execution order model pattern is correct, i.e.,
	the statechart execution order is the same with the configuration.
\end{theorem}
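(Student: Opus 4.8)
The plan is to mirror the proof of Theorem~\ref{thm:TWC}, reducing the global correctness claim to a property that can be discharged by the already-established per-function lemmas together with the transformation-correctness result of~\cite{Guo2016ICCPS}. First I would make precise what ``the statechart execution order is the same with the configuration'' means for the model of Fig.~\ref{fig:EOMex1}: the user supplies a permutation of the statechart identities, stored in the array $O[]$, and correctness requires that within each normal execution window---one normal cycle followed by the $n-1$ logic cycles added by the pattern---the statecharts take their single steps in exactly the order $O[0], O[1], \ldots, O[n-1]$. For the concrete example this specializes to: under the configuration $\mathtt{S1} \prec \mathtt{S2}$ the action $x=x+1$ fires strictly before $y=y+1$, and under $\mathtt{S2} \prec \mathtt{S1}$ the reverse holds.

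Next I would establish the per-cycle behavior from the two function lemmas. By Lemma~\ref{lm:updateExeInfo}, the $\mathtt{Manager}$'s call to $\mathtt{updateExeInfo}$ advances the token $t$ deterministically through $1, 2, \ldots, \mathtt{stNum}$ and then resets it to $1$; by Lemma~\ref{lm:run}, the guard $\mathtt{CEO.run}(\cdot)$ inserted into every transition by Procedure~\ref{prc:applyEOM} returns $\mathtt{true}$ for statechart $\mathtt{st}$ precisely when $O[t-1] = \mathtt{st}$. Consequently, in the cycle whose token value is $t$, the only statechart permitted to move is the one whose identity equals $O[t-1]$. An induction on $t$ from $1$ to $\mathtt{stNum}$ then shows that the sequence of statecharts that act across the window is $O[0], O[1], \ldots, O[\mathtt{stNum}-1]$, i.e.\ exactly the configured order.

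Finally, to obtain a formally verified statement rather than a hand argument, I would transform the pattern-augmented model of Fig.~\ref{fig:EOMex1} to timed automata. Since the transformation of basic statechart elements is proved correct in~\cite{Guo2016ICCPS} and the two underlying Java functions are proved correct in Lemmas~\ref{lm:updateExeInfo}--\ref{lm:run}, the resulting timed automaton faithfully reflects the token-driven ordering. I would then encode the desired ordering for each configuration as a reachability/ordering property in TCTL and discharge it with the model checker, analogously to the $E<> \ \mathtt{S1.C1}$ and $E<> \ \mathtt{S2.B2}$ obligations used in Theorem~\ref{thm:TWC}.

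The main obstacle I anticipate is the bridge from the \emph{local} correctness of a single token update and a single guard evaluation to the \emph{global} invariant that the whole multi-cycle window realizes the configured permutation. The lemmas certify one invocation each, whereas the theorem concerns the emergent interleaving produced by repeatedly alternating the $\mathtt{Manager}$'s token advance with the guarded steps of the ordinary statecharts; making the inductive ``token $t \Rightarrow$ exactly $O[t-1]$ acts, and no other statechart changes state'' argument airtight---in particular ruling out spurious moves and confirming that the added logic cycles do not perturb the ordering---is where the real work lies, and it is precisely this invariant that the TCTL check is meant to confirm.
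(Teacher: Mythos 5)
Your proposal is correct and follows essentially the same route as the paper: reduce the claim to concrete ordering properties of the model in Fig.~\ref{fig:EOMex1}, invoke Lemma~\ref{lm:updateExeInfo}, Lemma~\ref{lm:run}, and the transformation correctness of~\cite{Guo2016ICCPS}, and then discharge the properties by TCTL model checking under both configurations $\mathtt{S1} \prec \mathtt{S2}$ and $\mathtt{S2} \prec \mathtt{S1}$. The only differences are minor: the paper encodes the ordering as the invariants $A[~] \ x \ge y$ and $A[~] \ y \ge x$ rather than the reachability-style obligations you suggest by analogy with Theorem~\ref{thm:TWC}, and it omits your intermediate token-induction argument, delegating that global step entirely to the model checker.
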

\begin{proof}
	For the statechart model shown in Fig.~\ref{fig:EOMex1},
	if the execution order is $\mathtt{S1}$ $\prec$ $\mathtt{S2}$, the action
	$x=x+1$ is executed before $y=y+1$, i.e., $x \ge y$. If
	the execution order is $\mathtt{S2}$ $\prec$ $\mathtt{S1}$, the action
	$y=y+1$ is executed before $x=x+1$, i.e., $y \ge x$.
	Hence, to prove the correctness of the configurable execution order model pattern
	is equivalent to prove the above two properties under
	corresponding execution order configurations.
		
	We have prove the correctness of transformation of statechart elements
	in~\cite{Guo2016ICCPS} and correctness of the Java implementation
	in Lemma~\ref{lm:updateExeInfo} and Lemma~\ref{lm:run}.
	Hence, we can transform the statechart model in Fig.~\ref{fig:EOMex1} with configurable execution order
	model pattern to timed automata to verify the properties.
	The two properties can be formally verified with
	TCTL (timed computation tree logic)~\cite{behrmann2004tutorial} formulas $A[~] \ x \ge y$
	and $A[~] \ y \ge x$, respectively.
	We specify statechart execution orders to be $\mathtt{S1}$ $\prec$ $\mathtt{S2}$
	and $\mathtt{S2}$ $\prec$ $\mathtt{S1}$.
	The verification results show that the properties hold under
	corresponding execution order configurations.
	Therefore, the configurable execution order model pattern is correct.
\end{proof}

\section{Case Study}
\label{sec:exp}
In this section, we use the simplified airway laser surgery scenario presented
in Example~\ref{ex:laser} to demonstrate how the designed model patterns can facilitate developing executable medical guideline models and their impact on validation and verification of medical safety properties.

The simplified airway laser surgery scenario has two safety properties, i.e.,
\textbf{P1}: the laser and the ventilator can not be activated at the
same time;
and 
\textbf{P2}: the patient's $\mathtt{SpO}$ level can not be smaller
than $95\%$.
We run simulation of the simplified airway laser surgery statechart model
in Fig.~\ref{fig:laser} through \yakindu. The simulation results show
that the model reaches an unsafe state that
both $\mathtt{Laser}$ and $\mathtt{Ventilator}$ are on.
In addition, we use the \toolname\ tool~\cite{Guo2016ICCPS} to automatically transform the statechart model in Fig.~\ref{fig:laser}
to timed automata to formally verify the two safety properties.
The safety properties \textbf{P1} and \textbf{P2}
are verified in \uppaal\ by TCTL~\cite{behrmann2004tutorial} formulas
$A[~] \ ! (\mathtt{Laser.On}  \ \&\& \ \mathtt{Ventilator.On})$
and $A[~] \ \mathtt{SpO} >= 95$, respectively.
The verification results show that the safety property \textbf{P1}
indeed fails.
We trace back the execution path that fails \textbf{P1} to the statechart
model and find that the $\mathtt{deactivateLaser}$ \textit{event} sent by
$\mathtt{ventilator}$ can not be received by the $\mathtt{laser}$.
The reason is that \yakindu\ statecharts' priority-based execution semantics
cause that lower priority statecharts can not
send \textit{events} to higher priority statecharts.

We use Procedure~\ref{prc:applyTWC} and Procedure~\ref{prc:applyEOM}
to apply the two-way communication model pattern and the configurable
execution order model pattern to the
statechart model in Fig.~\ref{fig:laser}. The simplified airway laser
surgery statechart with model patterns is shown in Fig.~\ref{fig:laserPattern}.
We run simulation of the statechart model
in Fig.~\ref{fig:laserPattern} through \yakindu. The simulation results show
that the model does not reach any unsafe states.
We also transform the statechart model in Fig.~\ref{fig:laserPattern} to timed
automata, as shown in Fig.~\ref{fig:laserPatternU}.
The verification results also indicate that both safe properties \textbf{P1}
and \textbf{P2} are satisfied.

\setlength{\intextsep}{5pt}
\setlength{\columnsep}{0pt}
\begin{figure}[ht]
	\centering
	\includegraphics[width = 0.49\textwidth]{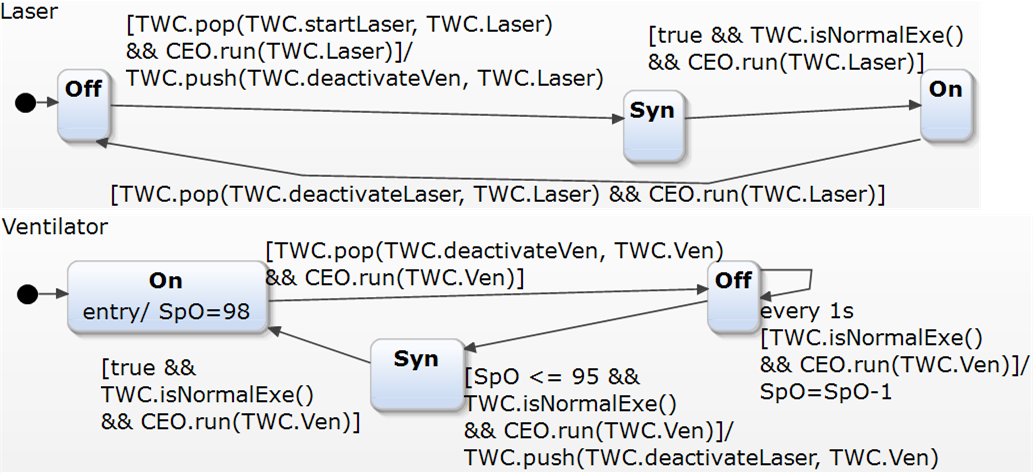}
	\caption{Simplified Airway Laser Surgery Statecharts with Model Patterns}
	\label{fig:laserPattern}
\end{figure}

\setlength{\intextsep}{5pt}
\setlength{\columnsep}{0pt}
\begin{figure}[ht]
	\centering
	\includegraphics[width = 0.49\textwidth]{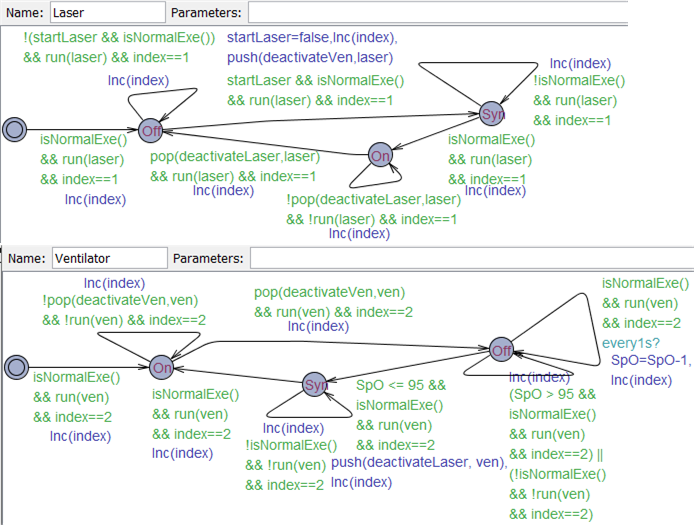}
	\caption{Simplified Airway Laser Surgery Timed Automata Model with Model Patterns}
	\label{fig:laserPatternU}
\end{figure}

The default statechart execution order of the statechart model in Fig.~\ref{fig:laserPattern}
is $\mathtt{Laser}$ $\prec$ $\mathtt{Ventilator}$.
Some medical professionals may prefer the execution order to be
$\mathtt{Ventilator}$ $\prec$ $\mathtt{Laser}$.
We specify the statechart execution order to be $\mathtt{Ventilator}$ $\prec$ $\mathtt{Laser}$
and validate/verify the two safety properties.
Both simulation and verification results show that \textbf{P1} and \textbf{P2}
are still satisfied.

\modify{
We also use the simplified airway laser surgery to illustrate
the correctness of designed model patterns when they are accompanied
by medical guideline models.
The correctness of the two-way communication model pattern is that
two statecharts can send events to each other.
For the simplified airway laser surgery model shown in
Fig.~\ref{fig:laserPattern},	
the communication from statechart $\mathtt{Laser}$ to $\mathtt{Ventilator}$
means that statechart $\mathtt{Ventilator}$ can receive \textit{event} $\mathtt{deactivateVen}$ sent
by statechart $\mathtt{Laser}$ and reach state $\mathtt{Ventilator.Off}$. Similarly,
the communication from statechart $\mathtt{Ventilator}$ to $\mathtt{Laser}$
is that statechart $\mathtt{Laser}$ can receive \textit{event} $\mathtt{deactivateLaser}$ sent
by statechart $\mathtt{Ventilator}$ and reach state $\mathtt{Laser.Off}$.
Hence, the correctness of two-way communication model pattern is that
both state $\mathtt{Ventilator.Off}$ and state $\mathtt{Laser.Off}$ are reachable.
We verify the two state reachability properties in the timed automata
model shown in Fig.~\ref{fig:laserPatternU} with TCTL formulas
$E<> \ \mathtt{Ventilator.Off}$ and $E<> \ \mathtt{Laser.Off}$, respectively.
The verification results show that both properties are satisfied,
i.e., the two-way communication model pattern is correct in
the simplified airway laser surgery model.

The correctness of the configurable execution order model pattern
is that the statechart execution order is the same with the configuration.
For the simplified airway laser surgery model shown in
Fig.~\ref{fig:laserPattern},
if the execution order is $\mathtt{Laser}$ $\prec$ $\mathtt{Ventilator}$,
the initial state $\mathtt{Laser.Off}$ is reached before the initial
state $\mathtt{Ventilator.On}$, which is represented by TCTL formula
$A[~] \ \mathtt{Ventilator.On} \ imply \ \mathtt{Laser.Off}$.
If the execution order is $\mathtt{Ventilator}$ $\prec$ $\mathtt{Laser}$,
the initial state $\mathtt{Ventilator.On}$ is reached before the initial
state $\mathtt{Laser.Off}$, which is represented by TCTL formula
$A[~] \ \mathtt{Laser.Off} \ imply \ \mathtt{Ventilator.On}$.
We specify statechart execution orders to be $\mathtt{Laser}$ $\prec$ $\mathtt{Ventilator}$
and $\mathtt{Ventilator}$ $\prec$ $\mathtt{Laser}$ and verify
corresponding properties in the timed automata model shown in
Fig.~\ref{fig:laserPatternU}, respectively.
The verification results show that the properties hold under
corresponding execution order configurations,
i.e., the configurable execution order model pattern is correct in
the simplified airway laser surgery model.
}

The case study demonstrates that
(1) the two-way communication and configurable execution order functionalities are
crucial in modeling medical guidelines;
and
(2) the designed model patterns can address the two-way communication and configurable execution order issues.

\section{Discussion}
\label{sec:discussion}
\modify{
The designed model patterns support important functionalities in
modeling systems with statecharts, i.e., the two-way communication
functionality and the configurable execution order functionality.
They can be directly applied to any application scenario that requires
two-way communication or configurable execution order.
If a statechart model needs to apply multiple model patterns,
we can just apply these model patterns one by one. For example,
the airway laser surgery statechart model shown in Fig.~\ref{fig:laserPattern}
first applies the two-way communication model pattern and then applies
the configurable execution order model pattern according to
Procedure~\ref{prc:applyTWC} and Procedure~\ref{prc:applyEOM}, respectively.
Given a statechart model with $N$ transitions, the time complexity of
applying the model patterns is $O(N)$.

The model patterns contain two major parts: a $\mathtt{Manager}$
statechart and a \textit{interface}.
If we want to implement the model patterns in other statechart-based modeling
platforms such as Stateflow~\cite{stateflow}, we only need to re-implement
the \textit{interface} based on corresponding modeling platforms' features.
The $\mathtt{Manager}$ statechart does not need to be re-designed,
as it only contains basic statechart elements, i.e., \textit{state},
\textit{transition}, \textit{guard}, and \textit{action}.
The proposed model pattern design approach is generalized and can be
applied to facilitate other functionalities that
are not directly supported by statecharts, such as exception handling.

The correctness of the model patterns is proved in Theorem~\ref{thm:TWC}
and Theorem~\ref{thm:EOM}. We also use the simplified airway laser surgery case
study to illustrate the correctness of the model patterns when they are accompanied
by medical guideline models.
To ensure the completeness of the model patterns' correctness proof,
we need to verify below properties when the model patterns are
applied in a statechart model.
For the two-way communication model pattern, we need to verify
that every \textit{event} sent by a statechart can be received
by all other statecharts in the model.
Regarding the configurable execution order model pattern,
for all possible statechart execution order configurations,
we need to verify that every statechart's initial state is reached
in sequence as configured.
}

\section{Conclusion}
\label{sec:conclusion}
Some essential functionalities in medical operations, such as two-way communication
and configurable execution order, are not directly supported by some open
source statechart modeling tools.
The paper presents an approach to apply model patterns to support these
essential functionalities and formally prove the correctness of
designed model patterns.
The approach can be applied to facilitate other functionalities that
are not directly supported by statecharts, such as exception handling.
The designed model patterns can be directly applied in many
application domains in which these functionalities are needed.

\appendices
\section{Proof of the Lemmas}

\subsection{Proof of Lemma~\ref{lm:initEventQueue}}
\label{subsec:initEventQueueProof}

\begin{proof}
	
	The proof of triple~\eqref{eq:initS2} is as follows.
	\begin{align*}
	& \quad Q [x:=c] \\
	& \equiv (x=1 \land \mathtt{exe}=\mathtt{true} \land n=0) \ \lor \\
	& \quad ((x=0 \lor x=c+1) \land \mathtt{exe}=\mathtt{false}) [x:=c] \\
	& \equiv (c=1 \land \mathtt{exe}=\mathtt{true} \land n=0) \ \lor \\
	& \quad ((c=0 \lor c=c+1) \land \mathtt{exe}=\mathtt{false}) \\
	& \equiv (c=1 \land \mathtt{exe}=\mathtt{true} \land n=0) \ \lor \\
	& \quad (c=0 \land \mathtt{exe}=\mathtt{false}) \\
	& \equiv Q_1
	\end{align*}
	To prove $\triple{R}{S_2}{Q_1}$ is equivalent to prove the
	following two triples:
	\begin{align}
	\label{eq:initS2if}
	\triple{R \land c=\mathtt{stNum}-1}{c:=0}{Q_1}	
	\end{align}	
	\begin{align}
	\label{eq:initS2else}
	\triple{R \land c\ne\mathtt{stNum}-1}{c:=a+1}{Q_1}	
	\end{align}
	
	The proof of triple~\eqref{eq:initS2if} is as follows.
	\begin{align*}
	& \quad Q_1 [c:=0] \\
	& \equiv (c=1 \land \mathtt{exe}=\mathtt{true} \land n=0) \ \lor \\
	& \quad (c=0 \land \mathtt{exe}=\mathtt{false}) [c:=0] \\
	& \equiv (0=1 \land \mathtt{exe}=\mathtt{true} \land n=0) \ \lor \\
	& \quad (0=0 \land \mathtt{exe}=\mathtt{false}) \\
	& \equiv \mathtt{exe}=\mathtt{false} \\
	& \equiv Q_2
	\end{align*}	
	\begin{align*}
	& \quad R \land c=\mathtt{stNum}-1 \\
	& \equiv ((a=c \land a=0 \land \mathtt{exe}=\mathtt{true} \land n=0 \land \mathtt{stNum}>1) \ \lor \\
	& \quad (a=c \land a < \mathtt{stNum} \land \mathtt{exe}=\mathtt{false}) \land \mathtt{stNum}>1) \ \land \\
	& \quad c=\mathtt{stNum}-1 \\
	& \equiv \mathtt{false} \ \lor \ (a=c \land c=\mathtt{stNum}-1 \land \mathtt{exe}=\mathtt{false} \ \land \\
	& \quad \mathtt{stNum}>1) \\
	& \equiv a=c \land c=\mathtt{stNum}-1 \land \mathtt{exe}=\mathtt{false} \land \mathtt{stNum}>1
	\end{align*}
	As $R \land c=\mathtt{stNum}-1 \rightarrow Q_2$, hence the triple~\eqref{eq:initS2if} is correct.
	
	The proof of triple~\eqref{eq:initS2else} is as follows.
	\begin{align*}
	& \quad Q_1 [c:=a+1] \\
	& \equiv (c=1 \land \mathtt{exe}=\mathtt{true} \land n=0) \ \lor \\
	& \quad (c=0 \land \mathtt{exe}=\mathtt{false}) [c:=a+1] \\
	& \equiv (a+1=1 \land \mathtt{exe}=\mathtt{true} \land n=0) \ \lor \\
	& \quad (a+1=0 \land \mathtt{exe}=\mathtt{false}) \\
	& \equiv (a=0 \land \mathtt{exe}=\mathtt{true} \land n=0) \ \lor \\
	& \quad (a=-1 \land \mathtt{exe}=\mathtt{false}) \\
	& \equiv Q_3
	\end{align*}	
	\begin{align*}
	& \quad R \land c \ne \mathtt{stNum}-1 \\
	& \equiv ((a=c \land a=0 \land \mathtt{exe}=\mathtt{true} \land n=0 \land \mathtt{stNum}>1) \ \lor \\
	& \quad (a=c \land a < \mathtt{stNum} \land \mathtt{exe}=\mathtt{false} \land \mathtt{stNum}>1)) \ \land \\
	& \quad c \ne \mathtt{stNum}-1 \\
	& \equiv (a=c \land a=0 \land \mathtt{exe}=\mathtt{true} \land n=0 \land \mathtt{stNum}>1) \ \lor \\
	& \quad (a=c \land a < \mathtt{stNum}-1 \land \mathtt{exe}=\mathtt{false} \land \mathtt{stNum}>1) \\
	\end{align*}
	As $R \land c \ne \mathtt{stNum}-1 \rightarrow Q_3$, hence the triple~\eqref{eq:initS2else} is correct. Therefore, the triple~\eqref{eq:initS2} is correct.

	Similarly, the proof of triple~\eqref{eq:initS1} is as follows.
	\begin{align*}
	& \quad R [a:=c] \\
	& \equiv (a=c \land a=0 \land \mathtt{exe}=\mathtt{true} \land n=0 \land \mathtt{stNum}>1) \ \lor \\
	& \quad (a=c \land a < \mathtt{stNum} \land \mathtt{exe}=\mathtt{false} \land \mathtt{stNum}>1) [a:=c] \\	
	& \equiv (c=0 \land \mathtt{exe}=\mathtt{true} \land n=0 \land \mathtt{stNum}>1) \ \lor \\
	& \quad (c < \mathtt{stNum} \land \mathtt{exe}=\mathtt{false} \land \mathtt{stNum}>1) \\
	& \equiv R_1
	\end{align*}
	To prove $\triple{P}{S_1}{R_1}$ is equivalent to prove the
	following two triples:
	\begin{align}
	\label{eq:initS1if}
	\triple{P \land c=0}{\mathtt{exe}:=\mathtt{true};n:=0}{R_1}	
	\end{align}	
	\begin{align}
	\label{eq:initS1else}
	\triple{P \land c \ne 0}{\mathtt{exe}:=\mathtt{false}}{R_1}	
	\end{align}
	
	The proof of triple~\eqref{eq:initS1if} is as follows.
	\begin{align*}
	& \quad R_1 [n:=0][\mathtt{exe}:=\mathtt{true}] \\	
	& \equiv (c=0 \land \mathtt{exe}=\mathtt{true} \land n=0 \land \mathtt{stNum}>1) \lor (c < \mathtt{stNum} \\
	& \quad \land \mathtt{exe}=\mathtt{false} \land \mathtt{stNum}>1) [n:=0][\mathtt{exe}:=\mathtt{true}] \\	
	& \equiv (c=0 \land \mathtt{exe}=\mathtt{true} \land \mathtt{stNum}>1) \ \lor \\
	& \quad (c < \mathtt{stNum} \land \mathtt{exe}=\mathtt{false} \land \mathtt{stNum}>1) [\mathtt{exe}:=\mathtt{true}] \\	
	& \equiv c=0 \land \mathtt{stNum}>1 \\
	& \equiv R_2
	\end{align*}	
	\begin{align*}
	& \quad P \land c=0 \\
	& \equiv c < \mathtt{stNum} \land \mathtt{stNum}>1 \land c=0 \\
	& \equiv c=0 \land \mathtt{stNum}>1
	\end{align*}
	As $P \land c=0 \equiv R_2$, hence the triple~\eqref{eq:initS1if} is correct.
	
	The proof of triple~\eqref{eq:initS1else} is as follows.
	\begin{align*}
	& \quad R_1 [\mathtt{exe}:=\mathtt{false}] \\
	& \equiv (c=0 \land \mathtt{exe}=\mathtt{true} \land n=0 \land \mathtt{stNum}>1) \ \lor \\
	& \quad (c < \mathtt{stNum} \land \mathtt{exe}=\mathtt{false} \land \mathtt{stNum}>1) [\mathtt{exe}:=\mathtt{false}] \\	
	& \equiv c < \mathtt{stNum} \land \mathtt{stNum}>1 \\
	& \equiv R_3
	\end{align*}	
	\begin{align*}
	& \quad P \land c \ne 0 \\
	& \equiv c < \mathtt{stNum} \land \mathtt{stNum}>1 \land c \ne 0
	\end{align*}
	As $P \land c \ne 0 \rightarrow R_3$, hence the triple~\eqref{eq:initS1else} is correct.
	Therefore, the triple~\eqref{eq:initS1} is correct.
	
	Therefore, the lemma is correct.
\end{proof}

\subsection{Proof of Lemma~\ref{lm:pop}}
\label{subsec:popProof}

\begin{proof}
%
	
	~\newline
	\textbf{Step 1:} prove that the initialization establishes
	the loop invariant, i.e., $\triple{P}{x:=\mathtt{false}; i:=0; v:=0}{\mathtt{inv}}$.
	
	The proof of triple $\triple{P}{x:=\mathtt{false}; i:=0; v:=0}{\mathtt{inv}}$
	is as follows.	
	\begin{align*}
	& \quad \ \mathtt{inv}[v:=0][i:=0][x:=\mathtt{false}]\\	
	& \equiv 0 \le i \le n \land (x=\mathtt{false} \lor (x=\mathtt{true} \ \land \\
	& \quad \ E[v]=e \ \land \ ((\mathtt{exe}=\mathtt{true} \land r>S[v]) \ \lor \\
	& \quad \ (\mathtt{exe}=\mathtt{false} \land r<S[v])))) [v:=0][i:=0][x:=\mathtt{false}] \\	
	& \equiv 0 \le i \le n \land (x=\mathtt{false} \lor (x=\mathtt{true} \ \land \\
	& \quad \ E[0]=e \ \land \ ((\mathtt{exe}=\mathtt{true} \land r>S[0]) \ \lor \\
	& \quad \ (\mathtt{exe}=\mathtt{false} \land r<S[0])))) [i:=0][x:=\mathtt{false}] \\	
	& \equiv \mathtt{true} \land (x=\mathtt{false} \lor (x=\mathtt{true} \ \land \\
	& \quad \ E[0]=e \ \land \ ((\mathtt{exe}=\mathtt{true} \land r>S[0]) \ \lor \\
	& \quad \ (\mathtt{exe}=\mathtt{false} \land r<S[0])))) [x:=\mathtt{false}] \\	
	& \equiv \mathtt{true} \land (\mathtt{true} \lor (\mathtt{false} \ \land \\
	& \quad \ E[0]=e \ \land \ ((\mathtt{exe}=\mathtt{true} \land r>S[0]) \ \lor \\
	& \quad \ (\mathtt{exe}=\mathtt{false} \land r<S[0])))) \\
	& \equiv \mathtt{true}		
	\end{align*}
	As $P \rightarrow \mathtt{true}$, hence the triple
	$\triple{P}{x:=\mathtt{false}; i:=0; v:=0}{\mathtt{inv}}$
	is correct.

	~\newline
	\textbf{Step 2:} prove that the loop body does not change
	the loop invariant, i.e., $\triple{\mathtt{inv} \land i < n}{S}{\mathtt{inv}}$,
	where
	\begin{align*}
	S \equiv	
	& \quad \kwif E[i] = e \land (\mathtt{exe}=\mathtt{true} \land r > S[i]) \ \lor \\
	& \quad \qquad (\mathtt{exe}=\mathtt{false} \land r < S[i]) \ \kwthen \\
	& \quad \qquad v:=i; \ x := \mathtt{true}; \ \kwfi; \\
	& \quad i:=i+1
	\end{align*}
	
	According to the composition rule~\cite{book536}, the proof
	of $\triple{\mathtt{inv} \land i < n}{S}{\mathtt{inv}}$ is
	equivalent to prove the following two triples:
	\begin{align}
	\label{eq:popStep2S1}
	\triple{\mathtt{inv} \land i < n}{S_1}{R}	
	\end{align}
	\begin{align}
	\label{eq:popStep2S2}
	\triple{R}{i:=i+1}{\mathtt{inv}}	
	\end{align}
	where	
	$R \equiv -1 \le i \le n-1 \land (x=\mathtt{false} \lor (x=\mathtt{true} \land E[v]=e \land ((\mathtt{exe}=\mathtt{true} \land r>S[v]) \lor (\mathtt{exe}=\mathtt{false} \land r<S[v]))))$,
	\begin{align*}
	S_1 \equiv	
	& \quad \kwif B \ \kwthen \ v:=i; \ x := \mathtt{true}; \ \kwfi
	\end{align*}
	and
	$B \equiv E[i] = e \land ((\mathtt{exe}=\mathtt{true} \land r > S[i]) \lor (\mathtt{exe}=\mathtt{false} \land r < S[i]))$.	
	
	The proof of triple~\eqref{eq:popStep2S2} is as follows.
	\begin{align*}
	& \quad \ \mathtt{inv} [i:=i+1] \\	
	& \equiv 0 \le i \le n \land (x=\mathtt{false} \lor (x=\mathtt{true} \ \land \\
	& \quad \ E[v]=e \ \land \ ((\mathtt{exe}=\mathtt{true} \land r>S[v]) \ \lor \\
	& \quad \ (\mathtt{exe}=\mathtt{false} \land r<S[v])))) [i:=i+1] \\	
	& \equiv -1 \le i \le n-1 \land (x=\mathtt{false} \lor (x=\mathtt{true} \ \land \\
	& \quad \ E[v]=e \ \land \ ((\mathtt{exe}=\mathtt{true} \land r>S[v]) \ \lor \\
	& \quad \ (\mathtt{exe}=\mathtt{false} \land r<S[v])))) \\
	& \equiv R
	\end{align*}
	Hence, the triple~\eqref{eq:popStep2S2} is correct.
	
	The proof of triple~\eqref{eq:popStep2S1}
	is equivalent to prove the
	following two triples:
	\begin{align}
	\label{eq:popStep2S1if}
	\triple{\mathtt{inv} \land i < n \land B}{v:=i; x:=\mathtt{true}}{R}	
	\end{align}	
	\begin{align}
	\label{eq:popStep2S1else}
	\triple{\mathtt{inv} \land i < n \land \neg B}{\kwskip}{R}	
	\end{align}
	
	The proof of triple~\eqref{eq:popStep2S1if} is as follows.
	\begin{align*}
	& \quad \ R [x:=\mathtt{true}] [v:=i] \\	
	& \equiv -1 \le i \le n-1 \land (x=\mathtt{false} \lor (x=\mathtt{true} \land \\
	& \quad \ E[v]=e \ \land \ ((\mathtt{exe}=\mathtt{true} \land r>S[v]) \ \lor \\
	& \quad \ (\mathtt{exe}=\mathtt{false} \land r<S[v])))) [x:=\mathtt{true}] [v:=i] \\	
	& \equiv -1 \le i \le n-1 \land (E[v]=e \land ((\mathtt{exe}=\mathtt{true} \land r>S[v]) \\
	& \quad \ \lor (\mathtt{exe}=\mathtt{false} \land r<S[v]))) [v:=i] \\	
	& \equiv -1 \le i \le n-1 \land (E[i]=e \land ((\mathtt{exe}=\mathtt{true} \land r>S[i]) \\
	& \quad \ \lor (\mathtt{exe}=\mathtt{false} \land r<S[i]))) \\
	& \equiv R_1
	\end{align*}	
	\begin{align*}
	& \quad \ \mathtt{inv} \land i < n \land B \\	
	& \equiv 0 \le i \le n \land (x=\mathtt{false} \lor (x=\mathtt{true} \ \land \\
	& \quad \ E[v]=e \ \land \ ((\mathtt{exe}=\mathtt{true} \land r>S[v]) \ \lor \\
	& \quad \ (\mathtt{exe}=\mathtt{false} \land r<S[v])))) \land i<n \ \land \\
	& \quad \ E[i] = e \land (\mathtt{exe}=\mathtt{true} \land r > S[i]) \ \lor \\
	& \quad \ (\mathtt{exe}=\mathtt{false} \land r < S[i]) \\	
	& \equiv 0 \le i < n \land (x=\mathtt{false} \lor (x=\mathtt{true} \ \land \\
	& \quad \ E[v]=e \ \land \ ((\mathtt{exe}=\mathtt{true} \land r>S[v]) \ \lor \\
	& \quad \ (\mathtt{exe}=\mathtt{false} \land r<S[v])))) \ \land \\
	& \quad \ E[i] = e \land (\mathtt{exe}=\mathtt{true} \land r > S[i]) \ \lor \\
	& \quad \ (\mathtt{exe}=\mathtt{false} \land r < S[i])
	\end{align*}	
	As $i$ is an integer, hence $0 \le i < n \rightarrow -1 \le i \le n-1$.
	As $\mathtt{inv} \land i < n \land B \rightarrow R_1$,
	hence the triple~\eqref{eq:popStep2S1if} is correct.
	
	Similarly, we prove the triple~\eqref{eq:popStep2S1else} as follows.
	\begin{align*}
	& \quad \ \mathtt{inv} \land i < n \land \neg B \\	
	& \equiv 0 \le i \le n \land (x=\mathtt{false} \lor (x=\mathtt{true} \ \land \\
	& \quad \ E[v]=e \ \land \ ((\mathtt{exe}=\mathtt{true} \land r>S[v]) \ \lor \\
	& \quad \ (\mathtt{exe}=\mathtt{false} \land r<S[v])))) \land i<n \land \neg B \\	
	& \equiv 0 \le i < n \land (x=\mathtt{false} \lor (x=\mathtt{true} \ \land \\
	& \quad \ E[v]=e \ \land \ ((\mathtt{exe}=\mathtt{true} \land r>S[v]) \ \lor \\
	& \quad \ (\mathtt{exe}=\mathtt{false} \land r<S[v])))) \land i<n \land \neg B
	\end{align*}
	As $\mathtt{inv} \land i < n \land \neg B \rightarrow R$,
	hence the triple~\eqref{eq:popStep2S1else} is correct.
	Therefore, the triple~\eqref{eq:popStep2S1} and triple	
	$\triple{\mathtt{inv} \land i < n}{S}{\mathtt{inv}}$ are correct.

	~\newline
	\textbf{Step 3:} prove that the bound function decreases after each iteration,
	i.e., $\triple{\mathtt{inv} \land i < n \land \mathtt{bd} = z}{S}{\mathtt{bd} < z}$,
	where $z$ is an integer.
	
	According to the composition rule~\cite{book536}, the proof
	of $\triple{\mathtt{inv} \land i < n \land \mathtt{bd} = z}{S}{\mathtt{bd} < z}$ is
	equivalent to prove the following two triples:
	\begin{align}
	\label{eq:popStep3S1}
	\triple{\mathtt{inv} \land i < n \land \mathtt{bd} = z}{S_1}{n-i<z+1}	
	\end{align}
	\begin{align}
	\label{eq:popStep3S2}
	\triple{n-i<z+1}{i:=i+1}{\mathtt{bd} < z}	
	\end{align}	
	
	The proof of triple~\eqref{eq:popStep3S2} is as follows.
	\begin{align*}
	& \quad \ \mathtt{bd} < z [i:=i+1] \\	
	& \equiv n-i<z [i:=i+1] \\	
	& \equiv n-i<z+1
	\end{align*}
	Hence, the triple~\eqref{eq:popStep3S2} is correct.
	
	The proof of triple~\eqref{eq:popStep3S1}
	is equivalent to prove the
	following two triples:
	\begin{align}
	\label{eq:popStep3S1if}
	\triple{\mathtt{inv} \land i < n \land \mathtt{bd} = z \land B}{v:=i; x:=\mathtt{true}}{n-i<z+1}	
	\end{align}	
	\begin{align}
	\label{eq:popStep3S1else}
	\triple{\mathtt{inv} \land i < n \land \mathtt{bd} = z \land \neg B}{\kwskip}{n-i<z+1}	
	\end{align}
	
	The proof of triple~\eqref{eq:popStep3S1if} is as follows.
	\begin{align*}
	& \quad \ n-i<z+1 [x:=\mathtt{true}] [v:=i] \\	
	& \equiv n-i<z+1
	\end{align*}	
	\begin{align*}
	& \quad \ \mathtt{inv} \land i < n \land \mathtt{bd} = z \land B \\	
	& \equiv n-i=z \land \mathtt{inv} \land i < n \land B
	\end{align*}	
	As $\mathtt{inv} \land i < n \land \mathtt{bd} = z \land B \rightarrow n-i<z+1$,
	hence the triple~\eqref{eq:popStep3S1if} is correct.
	
	Similarly, as $\mathtt{inv} \land i < n \land \mathtt{bd} = z \land \neg B  \rightarrow n-i<z+1$,
	hence the triple~\eqref{eq:popStep3S1else} is correct.
	Therefore, the triple~\eqref{eq:popStep3S1} and triple	
	$\triple{\mathtt{inv} \land i < n \land \mathtt{bd} = z}{S}{\mathtt{bd} < z}$
	are correct.

	~\newline	
	\textbf{Step 4:} prove that the loop invariant implies the bound
	function is non-negative, i.e., $\mathtt{inv} \rightarrow \mathtt{bd} \ge 0$.
	
	As $\mathtt{bd} \ge 0 \equiv n \ge i$, hence $\mathtt{inv} \rightarrow \mathtt{bd} \ge 0$.	
	
	~\newline
	\textbf{Step 5:} prove that the postcondition holds when the
	$\kwwhile$ loop terminates, i.e., $\mathtt{inv} \land \neg (i < n) \rightarrow Q$.
	
	\begin{align*}
	& \quad \ \mathtt{inv} \land \neg (i < n) \\	
	& \equiv 0 \le i \le n \land (x=\mathtt{false} \lor (x=\mathtt{true} \land \\
	& \quad \ E[v]=e \ \land \ ((\mathtt{exe}=\mathtt{true} \land r>S[v]) \ \lor \\
	& \quad \ (\mathtt{exe}=\mathtt{false} \land r<S[v])))) \ \land i \ge n \\
	& \equiv i=n \land (x=\mathtt{false} \lor (x=\mathtt{true} \land E[v]=e \land \\
	& \quad \ ((\mathtt{exe}=\mathtt{true} \land r>S[v]) \lor (\mathtt{exe}=\mathtt{false} \land r<S[v])))) \\
	& \rightarrow Q
	\end{align*}
\end{proof}

\subsection{Proof of Lemma~\ref{lm:updateExeInfo}}
\label{subsec:updateExeInfoProof}

\begin{proof}
	
	The proof of triple~\eqref{eq:updateExeInfoS2} is as follows.
	\begin{align*}
	& \quad \ Q [x := a] \\
	& \equiv (t=\mathtt{stNum} \land x=1) \lor (t \ne \mathtt{stNum} \land x=t+1) [x := a] \\
	& \equiv (t=\mathtt{stNum} \land a=1) \lor (t \ne \mathtt{stNum} \land a=t+1) \\
	& \equiv R
	\end{align*}
	Hence, the triple~\eqref{eq:updateExeInfoS2} is correct.

	The proof of triple~\eqref{eq:updateExeInfoS1} is
	equivalent to prove the following two triples:
	\begin{align}
	\label{eq:updateExeInfoS1if}
	\triple{P \land t = \mathtt{stNum}}{a := 1}{R}	
	\end{align}	
	\begin{align}
	\label{eq:updateExeInfoS1else}
	\triple{P \land t \ne \mathtt{stNum}}{a := t + 1}{R}	
	\end{align}
	
	The proof of triple~\eqref{eq:updateExeInfoS1if} is as follows.
	\begin{align*}
	& \quad \ R [a := 1] \\	
	& \equiv (t=\mathtt{stNum} \land a=1) \lor (t \ne \mathtt{stNum} \land a=t+1) [a := 1] \\	
	& \equiv t=\mathtt{stNum} \lor (t \ne \mathtt{stNum} \land t=0) \\
	& \equiv R_1
	\end{align*}
	As $P \land t = \mathtt{stNum} \rightarrow R_1$, hence the triple~\eqref{eq:updateExeInfoS1if} is correct.
	
	The proof of triple~\eqref{eq:updateExeInfoS1else} is as follows.
	\begin{align*}
	& \quad \ R [a := t + 1] \\
	& \equiv (t=\mathtt{stNum} \land a=1) \lor (t \ne \mathtt{stNum} \land a=t+1) [a := t + 1] \\	
	& \equiv (t=\mathtt{stNum} \land t=0) \lor t \ne \mathtt{stNum} \\
	& \equiv R_2
	\end{align*}	
	As $P \land t \ne \mathtt{stNum} \rightarrow R_2$, hence the triple~\eqref{eq:updateExeInfoS1else} is correct.
	Therefore, the triple~\eqref{eq:updateExeInfoS1} is correct.
	
	Therefore, the lemma is correct.
\end{proof}

\subsection{Proof of Lemma~\ref{lm:run}}
\label{subsec:runProof}

\begin{proof}	
	
	The proof of triple~\eqref{eq:runS1} is as follows.
	\begin{align*}
	& \quad Q [x := \mathtt{true}] \\
	& \equiv (x=\mathtt{true} \land O[t-1] = \mathtt{st}) \ \lor \\
	& \quad \ (x=\mathtt{false} \land O[t-1] \ne \mathtt{st}) [x := \mathtt{true}] \\	
	& \equiv (\mathtt{true}=\mathtt{true} \land O[t-1] = \mathtt{st}) \ \lor \\
	& \quad \ (\mathtt{true}=\mathtt{false} \land O[t-1] \ne \mathtt{st}) \\	
	& \equiv O[t-1] = \mathtt{st}
	\end{align*}
	As $P \land O[t-1] = \mathtt{st} \rightarrow O[t-1] = \mathtt{st}$,
	hence the triple~\eqref{eq:runS1} is correct.
	
	Similarly, the proof of triple~\eqref{eq:runS2} is as follows.
	\begin{align*}
	& \quad Q [x := \mathtt{false}] \\
	& \equiv (x=\mathtt{true} \land O[t-1] = \mathtt{st}) \ \lor \\
	& \quad \ (x=\mathtt{false} \land O[t-1] \ne \mathtt{st}) [x := \mathtt{false}] \\	
	& \equiv (\mathtt{false}=\mathtt{true} \land O[t-1] = \mathtt{st}) \ \lor \\
	& \quad \ (\mathtt{false}=\mathtt{false} \land O[t-1] \ne \mathtt{st}) \\	
	& \equiv O[t-1] \ne \mathtt{st}
	\end{align*}
	As $P \land O[t-1] \ne \mathtt{st} \rightarrow O[t-1] \ne \mathtt{st}$,
	hence the triple~\eqref{eq:runS2} is correct.
	
	Therefore, the lemma is correct.
\end{proof}

\section*{Acknowledgment}
\label{sec:acknowledgement}
This work is supported in part by NSF CNS 1545008,
NSF CNS 1842710, and NSF CNS 1545002.

\bibliographystyle{ieeetr}
\bibliography{ref}

\end{document}